\renewcommand{\baselinestretch}{1.3}
\newcounter{Fig}[figure]
\newcounter{Tab}[table]
   \samepage\vspace{0.2cm}
\newcommand{\bqa}{\begin{eqnarray*}}
\newcommand{\eqa}{\end{eqnarray*}}
\newcommand{\bqan}{\begin{eqnarray}}
\newcommand{\eqan}{\end{eqnarray}}
\newcommand{\bqt}{\begin{quote}}
\newcommand{\eqt}{\end{quote}}
\newcommand{\bt}{\begin{tabbing}}
\newcommand{\et}{\end{tabbing}}
\newcommand{\bit}{\begin{itemize}}
\newcommand{\eit}{\end{itemize}}
\newcommand{\ben}{\begin{enumerate}}
\newcommand{\een}{\end{enumerate}}
\newcommand{\beq}{\begin{equation}}
\newcommand{\eeq}{\end{equation}}
\newcommand{\bdefi}{\begin{definition}}
\newcommand{\edefi}{\end{definition}}
\newcommand{\bpro}{\begin{proposition}}
\newcommand{\epro}{\end{proposition}}
\newcommand{\blem}{\begin{lemma}}
\newcommand{\elem}{\end{lemma}}
\newcommand{\bth}{\begin{theorem}}
\newcommand{\eth}{\end{theorem}}
\newcommand{\bco}{\begin{corollary}}
\newcommand{\eco}{\end{corollary}}
\newcommand{\bdes}{\begin{description}}
\newcommand{\edes}{\end{description}}
\newcommand{\bre}{\begin{remark}}
\newcommand{\ere}{\end{remark}}
\newtheorem{definition}{Definition}[section]
\newtheorem{proposition}[definition]{Proposition}
\newtheorem{lemma}[definition]{Lemma}
\newtheorem{theorem}[definition]{Theorem}
\newtheorem{corollary}[definition]{Corollary}
\newtheorem{remark}[definition]{Remark}
\newenvironment{proof}[1][Proof]{\textbf{#1.} }{\ \rule{0.5em}{0.5em}}
\begin{document}

\begin{titlepage}

\title{Parametric insurance for extreme  risks: the challenge of properly covering severe claims}


\author{{\large Olivier L\textsc{opez}$^1$},
{\large Maud T\textsc{homas}$^1$}  }

\date{\today}
\maketitle

\renewcommand{\baselinestretch}{1.1}

\begin{abstract}
 Parametric insurance has emerged as a practical way to cover risks that may be difficult to assess. By introducing a parameter that triggers compensation and allows the insurer to determine a payment without estimating the actual loss, these products simplify the compensation process, and provide easily traceable indicators to perform risk management. On the other hand, this parameter may sometimes deviate from its intended purpose, and may not always accurately represent the basic risk. In this paper, we provide theoretical results that investigate the behavior of parametric insurance products when faced with large claims. In particular, these results measure the difference between the actual loss and the parameter in a generic situation, with a particular focus on heavy-tailed losses. These results may help to anticipate, in presence of heavy-tail phenomena, how parametric products should be supplemented by additional compensation mechanisms in case of large claims. Simulation studies, that complement the analysis, show the importance of nonlinear dependence measures in providing a good protection over the whole distribution.
\end{abstract}

\vspace*{0.5cm}

\noindent{\bf Key words:} Parametric insurance; extreme value analysis; risk theory; copula theory.

\vspace*{0.5cm}

\noindent{\bf Short title:} Parametric insurance and extreme risks.

\vspace*{.5cm}

{\small
\parindent 0cm
$^1$ Sorbonne Universit\'e, CNRS, Laboratoire de Probabilit\'es, Statistique et Mod\'elisation, LPSM, 4 place Jussieu, F-75005 Paris, France}

\end{titlepage}

\small
\normalsize
\addtocounter{page}{1}

\section{Introduction}

Parametric insurance is a very elegant and efficient way to simplify risk management in situations where the assessment of losses might be difficult \citep[see e.g.][]{lin2020application}. Parametric solutions have in particular been developed in the case of natural disasters \citep[see e.g.][]{van2011parametric,horton2018parametric}. A typical example is the case of an hurricane striking some area. The damages of such an episode can be very difficult to assess, leading to expertise costs and delays in the compensation process. The solution proposed by parametric insurance is not to cover  directly actual losses, but to work with some ``parameter", that is a quantity related to the loss and easily measurable. In the case of natural disasters, wind speed, precipitation level, or any index based on relevant physical quantities can be used. \citet{figueiredo2018probabilistic} described a detailed methodology in the example of parametric flood insurance in Jamaica. Since the parameter can be measured instantly (or in a short period of time), payment can be made more quickly. Furthermore, when it comes to assessing the risk, the situation is greatly simplified if one is working with an readily available quantity that can be tracked and modeled through standard actuarial methods. This explains the growing popularity of these solutions that are now widely used \citep[see e.g.][]{prokopchuk2020parametric,broberg2020parametric, bodily2021portfolio}. Moreover, parametric products open the way to insurance linked securities, with the example of Catbonds \citep[see e.g.][]{jarrow2010simple,albrecher2004qmc, zimbidis2007modeling}.

Nevertheless, parametric insurance is not a miracle solution. Reducing the volatility of the outcome has a cost. One of the difficulties is to convince the policyholder  of the relevance of a guarantee based on a given parameter. The attractiveness of such contracts may be reduced by the fear of the customers that the policy does not adequately cover the very risk they want to be protected against. This is particularly true if the parameter is complex and may not be fully understandable. In this case, simplifying the compensation process is not necessarily worth the loss in terms of protection. In addition, there are miscalculations, such as those reported by \citet{johnson2021paying}, that can discourage policyholders. \citet{johnson2021paying} details some ``Ex gratia'' payments that are sometimes activated to limit the impact of these inconveniences.

The aim of this article is to study, in a general simplified framework, under which conditions parametric insurance may still provide (or not) a good protection against  risk in case of large claims, taking the point of view of the policyholder. By large claims, we mean that the actual loss of the policyholder is large. These situations, which deviate from the central scenario that is supposed to guide the calibration of the parameter-based payoff, require special attention because they correspond to situations that may not be the most likely, but that correspond to important policyholder concerns: if the potential customers feel that they are not adequately covered in case of severe events---which is at the heart of the decision to use insurance---they may be reluctant to purchase such solutions.

In this work, we focus on two special cases. In the first case, the loss variable is supposed to have a Gaussian tail. In this situation, significant deviations from the central scenario are of low probability. Therefore, simply working on the correlation between the parameter and the loss is sufficient to improve the risk coverage. However, in the second case, losses with heavy tail are more challenging. Our results show that extreme losses may be difficult to capture, unless the parameter is able to reflect this heaviness. These results are consistent with practical observations from \cite{johnson2021paying}, and can be seen as a first step to improving parametric products by anticipating the disappointment of some policyholders in case of very large claims. We illustrate these properties with a simulation study inspired by the case of cyber risk, specifically data breaches insurance. In this case, the volume of data that are breached can be related to an estimated cost, and using this indicator to design parametric insurance products would make sense. The results are also illustrated on a real data set for extreme flood events. 

The rest of the paper is organized as follows. In Section \ref{sec_1}, we discuss the general framework we consider for evaluating the performance of a parameter used in parametric insurance. In particular, we focus on the question of the dependence between the parameter and the actual loss, which is essential to expect to obtain satisfactory properties. Section \ref{sec_2} gathers some theoretical results to measure how the parametric solution is able to approximate the true loss when the amount of the latter is high. The simulation study illustrating these properties in the case of cyber insurance is described in Section \ref{sec_cyber}. Finally, we provide a real data example on extreme flood events in Section \ref{sec:floods}. The proof of the technical results are listed in the appendix in Section \ref{sec_app}.

\section{Parametric insurance model}
\label{sec_1}

In this section, we explain the general framework used to model the difference between the actual loss and the payment made via the parametric insurance product. The general framework is described in Section \ref{sec_desc}. The key issue of the dependence between the two variables (actual loss and payment) is addressed in Section \ref{sec_dep} which introduces tools from copula theory.

\subsection{Description of the framework}
\label{sec_desc}

In the following, we consider a random variable $X$ representing the actual loss incurred by a policyholder. Parametric insurance relies on the fact that $X$ may be difficult to measure. In case of a natural disaster, $X$ may be, for example, the total cost of the event on a given area. It may take time to accurately estimate this cost (if even possible), and the idea is rather to pay a cost $Y$ that is not exactly $X,$ but is supposed to reflect it. $Y$ is supposed to be a variable that should be easier to measure.

For example, the precipitation level $\theta$, or other weather variables, can be obtained instantaneously, and a payoff can be derived from $Y,$ that is, in this case, $Y=\phi(\theta)$ for a given non-decreasing function $\phi.$ We will use the term ``parameter'' to refer the random variable $\theta.$

Ideally, we would like $Y$ to be close to $X.$ Another advantage of this approach is the potential reduction of volatility: paying $Y$ instead of $X$ is interesting in terms of risk management if the variance $\sigma_Y^2$ of $Y$ is small compared to the variance $\sigma^2_X$ of $X.$ Of course, if the variance of $Y$ is too small compared to $X,$ the quality of the approximation of $X$ by $Y$ may decrease, since the distribution of $Y$ does not match that of $X.$

\subsection{Dependence}
\label{sec_dep}

For the parameter $\theta,$ or, more precisely, the payoff $Y=\phi(\theta),$ to  accurately describe the risk, $X$ and $Y$ must be dependent. The simplest way to describe this dependence is by correlation. More precisely, let $\rho$ be the correlation coefficient of $X$ and $Y$ defined by  $\rho=\mathrm{Cor}(X,Y)=\mathrm{Cov}(X,Y)\sigma_X^{-1}\sigma_Y^{-1},$ where $\sigma_X^2=\mathrm{Var}[X]$ and $\sigma_Y^2=\mathrm{Var}[Y]$. Considering the quadratic loss, we have
$$\mathbb{E}[(X-Y)^2]=\sigma_X^2+\sigma_Y^2-2\rho \sigma^2_X\sigma_Y^2+(\mathbb{E}[X^2]+\mathbb{E}[Y^2]-2\mathbb{E}[X]\mathbb{E}[Y]).$$
Hence, increasing this correlation reduces the loss.

However, correlation is known to be a measure of dependence that primarily considers 
the center of the distribution, but not the tail. When faced with a large claim, that is when $X$ is far from its expectation, correlation is not enough to ensure that $Y$ remains close to its target.

To illustrate this issue, consider the case of covering claims where $X$ exceeds a deductible $x_0.$ If $X$ is not observed and the insurance product is based on the parameter $\theta,$ the insurance company will make some mistakes: sometimes a payout will be initiated when $X<x_0,$ and sometimes no payout will occur even if $X\geq x_0.$ This is because $\theta$ is just a proxy to get to $X$: the payment is actually triggered when $\theta\geq t_0,$ and the event $\{\theta\geq t_0\}$ does not exactly match the event $\{X\geq x_0\}.$ A good parameter must be such that $\pi_+(t_0,x_0)=\mathbb{P}(\theta\geq t_0|X\geq x_0)$ and $\pi_{-}(t_0,x_0)=\mathbb{P}(\theta < t_0|X<x_0)$ are both close to 1. Maximizing $\pi_+$ is supposed to improve the satisfaction of the policyholder: coverage is obtained for almost all claims that are significant. On the other hand, a high value of $\pi_-$ guarantees that the insurer will not pay for claims that were initially beyond the scope of the product.

Let us introduce the cumulative distribution function (c.d.f.) $F_{\theta}(t)=\mathbb{P}(T\leq t)$ (resp. $F_X(x)=\mathbb{P}(X\leq x)$) defining the distribution of $\theta$ (resp. of $X$), and the joint c.d.f. $F_{\theta,X}(t,x)=\mathbb{P}(\theta\leq t,X\leq x).$ A common and general way to describe the dependence structure between $\theta$ and $X$ is to use copulas. The copula theory is based on the fundamental result of \citet{sklar} which states that
\begin{equation}F_{\theta,X}(t,x) = \mathfrak{C}(F_{\theta}(t),F_{\theta}(x)), \label{cop}
\end{equation}
where $\mathfrak{C}$ is a copula function, that is, the joint distribution function of a two-dimensional variable on $[0,1]^2$ with uniformly distributed margins. The decomposition is unique if $\theta$ and $X$ are continuous, which is the assumption we make in the following. Hence, (\ref{cop}) shows that there is a separation between the marginal behavior of $(\theta,X),$ and the dependence structure contained in $\mathfrak{C}.$ Many parametric families of copulas have been proposed to describe various forms of dependence   \citep[see e.g.][]{nelsen}.

Let us write $\pi_+$ and $\pi_-$ in terms of copulas. Introducing the survival functions $S_{\theta}(t)=1-F_{\theta}(t),$ $S_X(x)=1-F_X(x),$ and $S(t,x)=\mathbb{P}(\theta\geq t,X\geq x),$ we have
\begin{eqnarray*}
\pi_+(t_0,x_0) &=& \frac{\mathfrak{C}^*(S_{\theta}(t_0),S_{X}(x_0))}{S_{X}(x_0)}, \\
\pi_-(t_0,x_0) &=& \frac{S_{\theta}(t_0)-S(t_0,x_0)}{F_X(x_0)},
\end{eqnarray*}
where $\mathfrak{C}^*$ is the survival copula associated with $\mathfrak{C},$ that is
$$\mathfrak{C}^*(v,w)=v+w-1+\mathfrak{C}(1-x,1-w).$$

 To relate this to classical dependence measures for $\pi_+$, consider the case where $S_{\theta}(t_0)=S_X(x_0)=u.$ In this situation, a large value of $\pi_+$ means that the deductible on $\theta$ that we use (based on a quantile of the distribution of $\theta$), has approximately the same effect as a deductible directly on $X$ (based on the same quantile). $\pi_+$ close to 1 yields
$\mathfrak{C}^*(u,u)/u \approx 1$. If $u$ is small (which means that we are focusing on higher quantiles, that is large claims), $\pi_+$ becomes close to $$\lambda=\lim_{u\rightarrow 0}\frac{\mathfrak{C}^*(u,u)}{u}=\lim_{u\rightarrow 0}\mathbb{P}(\theta \geq S_\theta^{-1}(u)|X\geq S_X^{-1}(u)),$$ which is the upper tail dependence  \citep[see][]{nelsen}. This shows that if we focus on large claims, correlation is not sufficient  to correctly represent the risk, tail dependence seems more relevant.

In this discussion, we only have  focused on what triggers the payment, that is, when $\theta>t_0.$ However, the difference between the payment $Y=\phi(\theta)$ and the actual loss $X$ must also be studied, which is the purpose of the next section. 
\begin{remark}
$\pi_-$ can also be expressed in terms of a copula, but is more difficult to link to the classical dependence measure. Our scope being essentially to focus on the tail of the distribution and on the potential difference between what the customer expects and what the customers gets, we do not develop this point.
\end{remark}

\section{Difference between the actual loss and the payoff based on the parameter}
\label{sec_2}

In this section, we provide theoretical results to quantify the difference between $X$ and $Y$ when a claim is large, that is when $X$ is large. The quantities measured are defined in Section \ref{sec_mes}. We then consider two types of distribution: Gaussian variables (Section \ref{sec_gauss}) are used as a benchmark, while heavy-tailed variables are considered in Section \ref{sec_heavy}.

\subsection{Measure the difference}
\label{sec_mes}

In the following, we consider two different quantities to measure the distance between $Y=\phi(\theta)$ and $X$ for large claims, that is when $X$ exceeds some high value $s.$

The first measure is $\mathbb{E}[X-Y|X\geq s].$ The advantage of this measure is that it shows if $Y$ tends to be smaller or larger than $X.$ On the other hand, the conditional bias $\mathbb{E}[X-Y|X\geq s]$ can be zero (if $\mathbb{E}[Y|X]=X$) while the conditional variance can be large, leading to potentially huge gaps between $X$ and $Y$ in practice.

For this reason, we also consider a classical quadratic loss, that is
$\mathbb{E}[(X-Y)^2|X\geq s].$ Note that this quadratic loss may not be defined for distributions that have a too heavy tail (this is also the case for $\mathbb{E}[X-Y|X\geq s]$ which may not be defined if the expectation is infinite, but the assumption of a finite variance further restrains the set of distributions).

To understand how the approximation made by the parametric approach deteriorates when $X$ is large, we will next derive asymptotic approximations of these quantities when $s$ tends to infinity.

\subsection{Gaussian losses}
\label{sec_gauss}

In this section, we assume that $(X,Y)$ are Gaussian variables with 
\begin{equation}
    \label{loi_gauss}
\left(\begin{array}{c} X \\ Y \end{array}\right)\sim \mathcal{N}\left(\left(\begin{array}{c}\mu_X \\ \mu_Y \end{array}\right),\left(\begin{array}{cc}\sigma^2_X & \rho \sigma_X\sigma_Y \\ \rho \sigma_X\sigma_Y & \sigma_Y^2\\
\end{array}\right)\right).\end{equation}

Considering such a framework is not entirely realistic, in the sense that $X$ and $Y$ could take negative values with a non-zero probability. Nevertheless, if $\mu_X$ and $\mu_Y$ are large enough, the probability of such an event is quite small. The Gaussian case is considered here mainly because it gives us an example of variables strongly concentrated around their expectation, in order to measure the difference with heavy-tailed variables of Section \ref{sec_heavy}.

In addition, another motivation for considering Gaussian variables is the Central Limit Theorem. If $X$ consists of the aggregation of individual claims, that is $X=\sum_{i=1}^n Z_i,$ where $(Z_i)_{1\leq i \leq n}$ are independent identically distributed losses, the Central Limit Theorem states that $X$ is approximately distributed as a Gaussian random variable with mean $n \mathbb{E}[Z_1]$ and variance $n^{1/2}\mathrm{Var}(Z_1),$ provided that $n$ is sufficiently large. A Gaussian limit can also be obtained under certain weak forms of dependence for these aggregated losses. This requires of course that the variance of $Z_1$ is finite.

A specificity of Gaussian random vectors is that their dependence structure is solely determined by the correlation matrix. Here, the dependence is driven by the correlation coefficient $\rho.$ As already mentioned, this is somehow a way to define dependence in the central part of the distribution. Because of the particular structure of Gaussian variable, this quantity has also an effect on the tail, that is we consider even situations where $X\geq s$ with $s$ large.

Propositions \ref{prop_gauss1} and \ref{prop_gauss2} provide explicit formulas for $\mathbb{E}[X-Y|X\geq s]$ and $\mathbb{E}[(X-Y)^2|X\geq s].$

\begin{proposition}
\label{prop_gauss1} Consider a random vector distributed as (\ref{loi_gauss}). Then, as $s \to +\infty$,
\begin{equation}\label{g1}\mathbb{E}[X-Y|X\geq s]\sim (\mu_X-\mu_Y)+\left(1-\frac{\rho \sigma_Y}{\sigma_X}\right)(s-\mu_X),
\end{equation}
where $\sim$ is the symbol for equivalence.
\end{proposition}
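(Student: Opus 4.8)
The plan is to exploit the standard fact that, for a bivariate Gaussian vector as in (\ref{loi_gauss}), the conditional distribution of $Y$ given $X=x$ is Gaussian with mean $\mu_Y+\rho\,\frac{\sigma_Y}{\sigma_X}(x-\mu_X)$. Hence, conditioning on $X$ first and using the tower property,
\[
\mathbb{E}[X-Y\mid X\geq s]
=\mathbb{E}\!\left[\,X-\mu_Y-\rho\tfrac{\sigma_Y}{\sigma_X}(X-\mu_X)\;\middle|\;X\geq s\right]
=(\mu_X-\mu_Y)+\Bigl(1-\tfrac{\rho\sigma_Y}{\sigma_X}\Bigr)\bigl(\mathbb{E}[X\mid X\geq s]-\mu_X\bigr).
\]
So the whole statement reduces to the one-dimensional Mills-ratio asymptotics $\mathbb{E}[X\mid X\geq s]-\mu_X\sim s-\mu_X$ as $s\to+\infty$.

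\textbf{Main steps.} First I would write $X=\mu_X+\sigma_X Z$ with $Z$ standard normal, so that $\mathbb{E}[X\mid X\geq s]-\mu_X=\sigma_X\,\mathbb{E}[Z\mid Z\geq z_s]$ with $z_s=(s-\mu_X)/\sigma_X$. Next I would recall the classical identities $\mathbb{E}[Z\mid Z\geq z]=\varphi(z)/(1-\Phi(z))$ (since $\int_z^\infty t\varphi(t)\,dt=\varphi(z)$) together with the Mills ratio asymptotic $1-\Phi(z)\sim \varphi(z)/z$ as $z\to\infty$, which gives $\mathbb{E}[Z\mid Z\geq z]\sim z$. Multiplying back by $\sigma_X$ yields $\mathbb{E}[X\mid X\geq s]-\mu_X\sim \sigma_X z_s=s-\mu_X$. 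Substituting into the identity from the first paragraph gives $\mathbb{E}[X-Y\mid X\geq s]=(\mu_X-\mu_Y)+(1-\rho\sigma_Y/\sigma_X)(s-\mu_X)(1+o(1))$, which is exactly (\ref{g1}) once one notes that the $o(1)$ term is absorbed into the equivalence (the leading term is linear in $s$, so $(\mu_X-\mu_Y)$ is itself $o(s-\mu_X)$ and the displayed right-hand side is asymptotically equivalent to its dominant linear piece).

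\textbf{Anticipated obstacle.} There is essentially no deep difficulty here; the one point requiring a little care is the bookkeeping of the equivalence symbol $\sim$ in (\ref{g1}). Strictly, the right-hand side of (\ref{g1}) is dominated as $s\to\infty$ by the term $(1-\rho\sigma_Y/\sigma_X)(s-\mu_X)$ (assuming $\rho\sigma_Y\neq\sigma_X$), so I must check that the additive constant $\mu_X-\mu_Y$ and the $o(s)$ error from the Mills-ratio expansion are consistently negligible relative to that leading linear term, i.e. that writing the answer with the constant retained is a harmless---and informative---abuse of the $\sim$ notation. In the degenerate case $\rho\sigma_Y=\sigma_X$ the linear term vanishes and one would instead need the next-order term in the Mills ratio, $\mathbb{E}[Z\mid Z\geq z]=z+z^{-1}+o(z^{-1})$, to describe the true behaviour; I would either note this explicitly or simply restrict to $\rho\sigma_Y\neq\sigma_X$, which is the generic and interesting case.
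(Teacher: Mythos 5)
Your proposal is correct and follows essentially the same route as the paper: both condition on $X$ via the linear regression formula $\mathbb{E}[Y\mid X]=\mu_Y+\rho\frac{\sigma_Y}{\sigma_X}(X-\mu_X)$ and then invoke the Gaussian hazard-rate (Mills ratio) asymptotic $\mathbb{E}[X\mid X\geq s]-\mu_X\sim s-\mu_X$. Your extra remarks on the interpretation of $\sim$ and the degenerate case $\rho\sigma_Y=\sigma_X$ are careful additions but do not change the argument.
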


Note that if $\mathbb{E}[Y|X]=X,$ that is if $\rho \sigma_Y\sigma_X^{-1}=1$ and $\mu_X=\mu_Y,$ we find that $\mathbb{E}[X-Y|X\geq s]=0.$ Apart from this trivial case, we can decompose (\ref{g1}) into two parts. First, the difference between the expectations of $\mu_X$ and $\mu_Y,$ which reflects the ability of $Y$ to capture the central part of the distribution of $X.$ The second term increases with $s$ when $\rho \sigma_Y\sigma_X^{-1}<1.$ However, a large value of the correlation coefficient $\rho$ tends to reduce this effect. Therefore, we can rely on  a strong correlation between $X$ and $Y$ to improve the ability of the parametric insurance contract to provide good results even for large claims.

On the other hand, the case $\rho \sigma_Y\sigma_X^{-1}\geq 1$ is less interesting to study: it would correspond to a situation where $\sigma_Y\geq \sigma_X,$ that is a payoff based on the parameter which is more volatile that the one we would have directly used $X.$ While this situation may occur, it is not the ideal case where parametric insurance is used to both facilitate the collection of information required to trigger claim payment, and reduce the uncertainty.

Similar observations apply from the Proposition \ref{prop_gauss2} result for the quadratic loss.

\begin{proposition}
\label{prop_gauss2} Consider a random vector distributed as (\ref{loi_gauss}). Then, as $s \to +\infty$,
\begin{equation}\label{g2}\mathbb{E}[(X-Y)^2|X\geq s]\sim \left(1-\rho\frac{\sigma_Y}{\sigma_X}\right)^2 \frac{s^2}{\sigma_X^2}.\end{equation}
\end{proposition}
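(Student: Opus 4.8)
The plan is to exploit the conditional distribution of $Y$ given $X$, which for a Gaussian vector is itself Gaussian with a known mean and a variance that does not depend on the conditioning value. Writing $Y = \mu_Y + \rho\frac{\sigma_Y}{\sigma_X}(X-\mu_X) + \varepsilon$, where $\varepsilon$ is $\mathcal{N}(0,\sigma_Y^2(1-\rho^2))$ independent of $X$, we get
\begin{equation}
X - Y = \left(1 - \rho\frac{\sigma_Y}{\sigma_X}\right)(X-\mu_X) + (\mu_X - \mu_Y) - \varepsilon.
\end{equation}
Squaring and taking the conditional expectation given $X \geq s$, the cross terms involving $\varepsilon$ vanish (independence, zero mean), leaving
\begin{equation}
\mathbb{E}[(X-Y)^2 \mid X\geq s] = \left(1 - \rho\frac{\sigma_Y}{\sigma_X}\right)^2 \mathbb{E}[(X-\mu_X)^2 \mid X\geq s] + 2(\mu_X-\mu_Y)\left(1-\rho\frac{\sigma_Y}{\sigma_X}\right)\mathbb{E}[X-\mu_X \mid X\geq s] + (\mu_X-\mu_Y)^2 + \sigma_Y^2(1-\rho^2).
\end{equation}
So everything reduces to the asymptotics of the first two conditional moments of $X-\mu_X$ given $X\geq s$ for a single Gaussian variable.

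The next step is to establish those asymptotics via the standard Mills ratio estimate. With $Z = (X-\mu_X)/\sigma_X$ and $z = (s-\mu_X)/\sigma_X$, we have $\mathbb{E}[Z \mid Z\geq z] = \varphi(z)/\bar\Phi(z)$ and $\mathbb{E}[Z^2 \mid Z\geq z] = 1 + z\,\varphi(z)/\bar\Phi(z)$, where $\varphi,\bar\Phi$ are the standard normal density and survival function. Since $\varphi(z)/\bar\Phi(z) \sim z$ as $z\to\infty$, we get $\mathbb{E}[X-\mu_X \mid X\geq s] \sim (s-\mu_X)$ and $\mathbb{E}[(X-\mu_X)^2 \mid X\geq s] \sim (s-\mu_X)^2 \sim s^2/\sigma_X^2 \cdot \sigma_X^2$... more precisely $\mathbb{E}[(X-\mu_X)^2 \mid X\geq s] \sim s^2$ (dividing by $\sigma_X^2$ only enters once we write it as $\sigma_X^2 z^2$, so $\mathbb{E}[(X-\mu_X)^2 \mid X\geq s] \sim \sigma_X^2 z^2 = (s-\mu_X)^2 \sim s^2$). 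I would note that the formula (\ref{g2}) as written, $(1-\rho\sigma_Y/\sigma_X)^2 s^2/\sigma_X^2$, appears to contain an extra $\sigma_X^{-2}$ relative to what this computation yields; assuming the intended right-hand side is $(1-\rho\sigma_Y/\sigma_X)^2 s^2$, the dominant term above is exactly $(1-\rho\sigma_Y/\sigma_X)^2 \mathbb{E}[(X-\mu_X)^2\mid X\geq s] \sim (1-\rho\sigma_Y/\sigma_X)^2 s^2$.

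The final step is to check that the remaining terms are negligible: the second term is $O(s)$, and the last two terms are $O(1)$, so all are $o(s^2)$ provided the leading coefficient $(1-\rho\sigma_Y/\sigma_X)^2$ is nonzero — i.e. we are outside the degenerate case $\rho\sigma_Y/\sigma_X = 1$ flagged after Proposition \ref{prop_gauss1}. Collecting terms gives the claimed equivalence. The only mildly delicate point is the Mills-ratio asymptotics and the truncated second moment identity; these are classical, so the main work is bookkeeping to confirm that every subdominant contribution is genuinely $o(s^2)$ and that the cross terms with $\varepsilon$ drop out cleanly. I do not anticipate a serious obstacle; the argument is essentially a one-variable tail computation dressed up through the Gaussian conditional representation.
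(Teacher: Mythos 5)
Your proof is correct, and it is essentially the paper's own argument in a cleaner packaging: both rest on the Gaussian conditional representation of $Y$ given $X$ (you write $Y=\mu_Y+\rho\frac{\sigma_Y}{\sigma_X}(X-\mu_X)+\varepsilon$; the paper works through $\mathbb{E}[Y|X]$ and $\mathrm{Var}(Y|X)$) and then reduce everything to truncated moments of a single Gaussian via the Mills-ratio asymptotics $\varphi(z)/\bar\Phi(z)\sim z$. Your decomposition has the advantage that the coefficient $\left(1-\rho\sigma_Y/\sigma_X\right)^2$ appears directly, instead of having to emerge from recombining $\mathbb{E}[X^2|X\geq s]$, $\mathbb{E}[Y^2|X\geq s]$ and $\mathbb{E}[XY|X\geq s]$. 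More importantly, the discrepancy you flag in the constant is real. The paper correctly obtains $m_2(s)=\mathbb{E}[X^2|X\geq s]=(\sigma_X^2+\mu_X^2)+\sigma_X^2h(s|\mu_X,\sigma_X^2)(s+\mu_X)$, which is $\sim \sigma_X^2\, s\, h(s|\mu_X,\sigma_X^2)\sim s^2$; but in the final display it replaces $m_2(s)$ by $s\,h(s|\mu_X,\sigma_X^2)\sim s^2/\sigma_X^2$, silently dropping the factor $\sigma_X^2$. The stated right-hand side of (\ref{g2}) is also dimensionally inconsistent (the left side scales as $X^2$, while $s^2/\sigma_X^2$ is dimensionless), so the correct equivalent is $\left(1-\rho\sigma_Y/\sigma_X\right)^2 s^2$, exactly as you derive. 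Your explicit exclusion of the degenerate case $\rho\sigma_Y=\sigma_X$ (where the limit is the constant $(\mu_X-\mu_Y)^2+\sigma_Y^2(1-\rho^2)$ and the stated equivalence is vacuous) is a point the proposition itself leaves implicit.
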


The exact value for $\mathbb{E}[(X-Y)^2|X\geq s]$ can also be computed for a vector distributed as (\ref{loi_gauss}). The formula can be obtained from the proof of Proposition \ref{prop_gauss2}, which is made in Section \ref{sec_preuve_gauss2}.

\subsection{Heavy-tailed losses}
\label{sec_heavy}

Heavy-tailed random variables play an important role in Extreme Value Theory, \citep[see e.g.][]{beirlant, coles2001introduction}. Suppose that we are dealing with an i.i.d sample $Y_1,\ldots, Y_n$ whose c.d.f. $F$ satisfies the following property 
\begin{equation}\label{rv}
F(t) = t^{-\gamma}\ell(t)
\end{equation}
where $\ell$ is a slowly-varying function, namely 
$$\forall x>0, \; \lim_{t\rightarrow \infty} \frac{\ell(tx)}{\ell(t)}=1.$$ The fundamental result of Extreme Value Theory states that the excesses above a certain threshold $u$ of the sample, defined as $Y_i -u$ given that $Y_i >u$, converges in distribution toward a non-degenerated distribution and that this distribution necessarily  belongs  to a parametric family of distributions, called the generalized Pareto distributions. More precisely,  \citet{balkema1974residual} has shown, that under the hypothesis \eqref{rv} when $\gamma>0$, there exists normalization  constants $a_u$ and $b_u>0$ such that 
$$\mathbb{P}\left(Y_i -u\geq a_u +b_ux \mid Y_i >u\right)\underset{u \to \infty}{\longrightarrow} H_\gamma(x),$$
and $H_\gamma(x)$ is necessarily of the form, for $x>0$,
$$
H_\gamma(x)=1-(1+\gamma x)^{1/\gamma}.
$$
Furthermore, \citet{gnedenko} showed that the survival function $1-F(t)$ is necessarily of the form \eqref{rv}. The parameter $\gamma$ which reflects the heaviness of tail of $F$ is called the tail index. The higher $\gamma,$ the heavier the tail of the distribution: $X$ tends to take large values with a significant probability. Here, we consider the context where $\gamma>0$ and $X$ belongs to the heavy tail domain. The right-tail of heavy tail distributions decrease polynomially toward 0, their moments of order larger than $1/\gamma$ do not exist. For example, the Pareto, the Student, the log-gamma and the Cauchy distributions are heavy-tailed.

Hence, Assumption \ref{rv} allows us to cover a large set of distributions. In the following, we thus assume that
\begin{equation}\label{assum:heavy}
S_X(t) =\ell_X(t)t^{-1/\gamma_X},  \quad \text{ and } \quad S_Y(t) =\ell_Y(t)t^{-1/\gamma_Y}, 
\end{equation}
with $\gamma_X,\gamma_Y >0$ and $\ell_X$ and $\ell_Y$ two slowly-varying functions.

Note that heavy-tailed variables can also be used to approximate sums of losses. Considering the example of $X=\sum_{i=1}^n Z_i,$ if $Z_i$ are heavy-tailed i.i.d. random variables, \citet{mikosch} show that, when dealing high quantiles, $X$ can be approximated by a heavy-tailed variable.

We do not provide here explicit values for $\mathbb{E}[X-Y|X\geq s]$ and $\mathbb{E}[(X-Y)^2|X\geq s],$ since these quantities depend on $\ell_X$ and $\ell_Y.$ Nevertheless, our results should provide general bounds for large values of $s$. We first consider the case of $\mathbb{E}[X-Y|X\geq s]$ in Proposition \ref{prop1}. We recall that $\mathbb{E}[X-Y|X\geq s]$ (resp. $\mathbb{E}[(X-Y)^2|X\geq s]$) is defined only if $\gamma_X$ and $\gamma_Y$ are less than 1 (resp. less than 0.5).

\begin{proposition}
\label{prop1}
Consider $X\geq 0$ and $Y\geq 0$ with survival functions as in \eqref{assum:heavy}, with $\gamma_X>\gamma_Y.$ There exists a constant $c>0$ depending on $\ell_X$ and $\ell_Y$ and not on their dependence structure, such that, for $s$ large enough,
$$\mathbb{E}[X-Y|X\geq s]\geq cs.$$
\end{proposition}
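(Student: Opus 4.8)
The plan is to write $\mathbb{E}[X-Y\mid X\geq s]=\mathbb{E}[X\mid X\geq s]-\mathbb{E}[Y\mid X\geq s]$ (all quantities being finite under the standing assumption $0<\gamma_Y<\gamma_X<1$) and to bound the two terms separately. The first term is immediate: on the event $\{X\geq s\}$ one has $X\geq s$, so $\mathbb{E}[X\mid X\geq s]\geq s$ (Karamata's theorem would even give $\mathbb{E}[X\mid X\geq s]\sim s/(1-\gamma_X)$, but this is not needed). All the work then lies in producing an upper bound for $\mathbb{E}[Y\mid X\geq s]$ that is $o(s)$ and, crucially, does not involve the copula of $(X,Y)$.

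The device that makes the bound dependence-free is the elementary truncation inequality: for every level $M>0$, writing $Y\mathbf{1}_{\{X\geq s\}}\le M\mathbf{1}_{\{X\geq s\}}+Y\mathbf{1}_{\{Y>M\}}$ and taking expectations,
\[
\mathbb{E}\big[Y\,\mathbf{1}_{\{X\geq s\}}\big]\;\le\; M\,\mathbb{P}(X\geq s)+\mathbb{E}\big[Y\,\mathbf{1}_{\{Y>M\}}\big].
\]
Dividing by $S_X(s)=\mathbb{P}(X\geq s)$ yields $\mathbb{E}[Y\mid X\geq s]\le M+\mathbb{E}[Y\mathbf{1}_{\{Y>M\}}]/S_X(s)$, whose right-hand side now depends only on the two marginals.

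Next I would estimate $\mathbb{E}[Y\mathbf{1}_{\{Y>M\}}]$ by integrating by parts and invoking Karamata's theorem: since $\gamma_Y<1$,
\[
\mathbb{E}\big[Y\,\mathbf{1}_{\{Y>M\}}\big]=M\,S_Y(M)+\int_M^{\infty}S_Y(t)\,dt\ \sim\ \frac{1}{1-\gamma_Y}\,M\,S_Y(M)\qquad (M\to\infty),
\]
so $\mathbb{E}[Y\mathbf{1}_{\{Y>M\}}]\le C_{\gamma_Y}\,M\,S_Y(M)$ for $M$ large. I would then calibrate $M=M(s)=s^{\alpha}$ for a suitable $\alpha\in(0,1)$, so that $M/s=s^{\alpha-1}\to 0$, and use \eqref{assum:heavy} to write
\[
\frac{M\,S_Y(M)}{s\,S_X(s)}=\frac{\ell_Y(s^{\alpha})}{\ell_X(s)}\;s^{\,\alpha(1-1/\gamma_Y)-(1-1/\gamma_X)}.
\]
The exponent of $s$ is strictly negative if and only if $\alpha>\gamma_Y(1-\gamma_X)/[\gamma_X(1-\gamma_Y)]$, and this threshold is $<1$ \emph{precisely because} $\gamma_X>\gamma_Y$; hence a valid $\alpha\in(0,1)$ exists (e.g.\ $\alpha=\gamma_Y/\gamma_X$). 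For any such $\alpha$ the negative power of $s$ dominates the slowly varying prefactor $\ell_Y(s^{\alpha})/\ell_X(s)$ by Potter's bounds (a slowly varying function is eventually squeezed between $t^{-\delta}$ and $t^{\delta}$ for every $\delta>0$), so $M\,S_Y(M)/(s\,S_X(s))\to 0$. Combining the three steps, $\mathbb{E}[Y\mid X\geq s]\le s^{\alpha}+o(s)=o(s)$, uniformly over all dependence structures, and therefore $\mathbb{E}[X-Y\mid X\geq s]\ge s-o(s)\ge s/2$ for $s$ large enough (the threshold on $s$ depending only on $\ell_X,\ell_Y,\gamma_X,\gamma_Y$); any constant $c\in(0,1)$ works.

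The main obstacle is the middle step, i.e.\ converting $\mathbb{E}[Y\mid X\geq s]$ into a copula-free estimate: the truncation inequality is what unlocks this, after which the only genuine computation is the choice of the truncation level $M=s^{\alpha}$ together with the regular-variation check that the admissible interval of exponents $\alpha$ is nonempty --- which is exactly the point where the hypothesis $\gamma_X>\gamma_Y$ enters.
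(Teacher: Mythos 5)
Your proof is correct, and it takes a genuinely different route from the paper's. The paper works with the auxiliary variable $Z=X-Y$: it first proves a sandwich lemma for $S_Z$ and deduces that $Z$ is regularly varying with the same tail index $\gamma_X$ as $X$ (this is Lemma \ref{l1} and Proposition \ref{prop_tail}), and then restricts the conditioning to the sub-event $\{Z\geq s\}\subset\{X\geq s\}$ to reduce the problem to the mean excess of $Z$ and the ratio $\ell_Z(s)/\ell_X(s)$. You instead never touch the law of $X-Y$: you bound $\mathbb{E}[X\mid X\geq s]\geq s$ trivially and eliminate $\mathbb{E}[Y\mid X\geq s]$ through the pointwise truncation $Y\mathbf{1}_{\{X\geq s\}}\le M\mathbf{1}_{\{X\geq s\}}+Y\mathbf{1}_{\{Y>M\}}$ with $M=s^{\alpha}$, which is what makes the estimate copula-free; the exponent computation showing the admissible interval of $\alpha$ is nonempty is exactly where $\gamma_X>\gamma_Y$ enters, and your regular-variation bookkeeping (Karamata for $\mathbb{E}[Y\mathbf{1}_{\{Y>M\}}]$, Potter bounds for the slowly varying ratio) is sound. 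Your route is more elementary and self-contained, yields the sharper conclusion $\mathbb{E}[X-Y\mid X\geq s]\geq(1-\varepsilon)s$ eventually for every $\varepsilon>0$ (consistent with Proposition \ref{prop2}, and stronger than the constant $c\approx 2^{-1/\gamma_X}$ implicit in the paper), and avoids the delicate step in the paper's argument where the term $\mathbb{E}[Z\mid X\geq s,\,Z<s]\,\mathbb{P}(Z<s\mid X\geq s)$, whose sign is not obviously nonnegative, is discarded. What the paper's longer route buys is the reusable tail-equivalence result for $X-Y$ (Proposition \ref{prop_tail}), which is needed again in the proof of Proposition \ref{prop3}.
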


Proposition \ref{prop1} shows that there is a linear increase in this difference for large values of $s.$ However, the situation is quite different from the Gaussian case. In the Gaussian case, we might expect to reduce the slope by relying on a strong correlation between $X$ and $Y.$ Here, improving the correlation would certainly have an effect, but not on the leading linear term. In fact, the distribution of $X$ being heavier than $Y,$ when $s$ becomes large, $X$ belongs to some areas which are inaccessible by $Y$ except with a very low probability. In practice, the difference between $\gamma_Y$ and $\gamma_X$ also plays a role, but again only on terms of smaller order.

We can obtain a more precise result under some additional assumptions, as we see in Proposition \ref{prop2} below.

\begin{proposition}
\label{prop2}
Consider $X\geq 0$ and $Y\geq 0$ with survival functions as in \eqref{assum:heavy} with $\gamma_X>\gamma_Y.$ Moreover, let
$$\psi(x)=\mathbb{E}[Y|X=x].$$
Assume that $\psi$ is strictly non decreasing and such that the random variable $\psi(X)$ is heavy-tailed. Then
$$\mathbb{E}[X-Y|X\geq s]=s-\psi(s)+o(s).$$
\end{proposition}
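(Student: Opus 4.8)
The plan is to write $X - Y = (X - \psi(X)) + (\psi(X) - Y)$ and treat the two pieces separately on the event $\{X \geq s\}$. The first piece should contribute the announced leading term $s - \psi(s) + o(s)$, while the second piece $\mathbb{E}[\psi(X) - Y \mid X \geq s]$ should be negligible. The key observation for the second piece is that, by the tower property, $\mathbb{E}[\psi(X) - Y \mid X] = \psi(X) - \mathbb{E}[Y\mid X] = 0$ by the very definition of $\psi$; but we need the \emph{conditional} version given $\{X \geq s\}$, which is not automatic. So the first step is to justify that $\mathbb{E}[\psi(X) - Y \mid X \geq s] = o(s)$, for instance by writing it as $\mathbb{E}[(\psi(X) - Y)\mathbf{1}_{X \geq s}]/S_X(s)$ and using $\mathbb{E}[(\psi(X)-Y)\mathbf{1}_{X\geq s}] = \mathbb{E}[\mathbf{1}_{X \geq s}\,\mathbb{E}[\psi(X)-Y \mid X]] = 0$, which is in fact exactly $0$ and not merely $o(s)$ — this is the clean part.

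Next I would handle $\mathbb{E}[X - \psi(X) \mid X \geq s]$. Since $\psi$ is strictly increasing and $\psi(X)$ is heavy-tailed by hypothesis, write $S_{\psi(X)}(t) = \ell_{\psi}(t) t^{-1/\gamma_\psi}$ for some $\gamma_\psi > 0$ and slowly varying $\ell_\psi$; moreover monotonicity of $\psi$ forces $\psi(X) \geq s' \iff X \geq \psi^{-1}(s')$, which lets one relate the two survival functions and shows $\gamma_\psi \le \gamma_X$ (so the relevant moments exist under $\gamma_X < 1/2 < 1$, and in any case under the standing hypothesis that the first measure is defined). Then $\mathbb{E}[X - \psi(X) \mid X \geq s] = \mathbb{E}[X \mid X \geq s] - \mathbb{E}[\psi(X)\mid X \geq s]$. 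For the first term, the standard Karamata-type estimate for regularly varying tails gives $\mathbb{E}[X \mid X \geq s] \sim s/(1-\gamma_X) = s + o(s)$ (more precisely $\mathbb{E}[X\mid X \ge s] = s + \int_s^\infty S_X(t)\,dt / S_X(s)$ and the integral term is $O(s)$ but with the right constant it combines to $s(1-\gamma_X)^{-1}$; what we actually need is just $\mathbb{E}[X\mid X\ge s] - s = O(s)$, or the sharper equivalent). The analogous computation with $\psi(X)$ in place of $X$, using its heavy-tailedness, gives $\mathbb{E}[\psi(X) \mid X \geq s] = \mathbb{E}[\psi(X) \mid \psi(X) \geq \psi(s)]$, which by the same Karamata estimate equals $\psi(s)/(1-\gamma_\psi) + o(\psi(s))= \psi(s) + o(s)$ (using $\psi(s) = O(s)$, which follows because $Y$ has a lighter tail than $X$ via $\gamma_Y < \gamma_X$, so $\psi(s) = \mathbb{E}[Y\mid X=s]$ cannot grow faster than $s$ — this monotone-regular-variation bookkeeping is where one must be slightly careful). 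Combining, $\mathbb{E}[X - \psi(X)\mid X\ge s] = s - \psi(s) + o(s)$.

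Putting the two pieces together yields $\mathbb{E}[X - Y \mid X \geq s] = (s - \psi(s) + o(s)) + 0 = s - \psi(s) + o(s)$, as claimed. The main obstacle, I expect, is the bookkeeping around the second term and the $o(s)$ control: one has to argue cleanly that $\psi(s)/s$ stays bounded (so that $o(\psi(s))$ is absorbed into $o(s)$) and that the conditional expectation $\mathbb{E}[\psi(X) - Y \mid X \geq s]$ really does vanish — the latter is immediate from the tower property once one resists the temptation to overcomplicate it. A secondary technical point is making the Karamata/Potter-bound estimates rigorous for $\mathbb{E}[X \mid X \ge s]$ when only $\gamma_X < 1$ is assumed (for $\mathbb{E}[X-Y|X\ge s]$ to be defined), including the case $\gamma_X$ close to $1$ where the constant $(1-\gamma_X)^{-1}$ is large but finite; this is standard but should be cited (e.g., from \citealp{beirlant}).
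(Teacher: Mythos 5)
Your proposal follows essentially the same route as the paper's proof: replace $Y$ by $\psi(X)$ via the tower property, use the strict monotonicity of $\psi$ to rewrite the conditioning event as $\{\psi(X)\geq\psi(s)\}$, and apply the heavy-tail mean-excess estimate to both $X$ and $\psi(X)$. Your extra remarks---that $\mathbb{E}[\psi(X)-Y\mid X\geq s]$ vanishes exactly by the tower property, and that one needs $\psi(s)=O(s)$ to absorb $o(\psi(s))$ into $o(s)$---merely make explicit steps the paper leaves implicit, so this is the same argument rather than a different one.
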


Under this additional assumption, we see that we even get a linear increase of $\mathbb{E}[X-Y|X\geq s],$ since $\psi(s)$ is less than $s$ since $\gamma_Y<\gamma_X.$

A similar result is obtained for the quadratic loss in Proposition \ref{prop3}, where we see that this quantity increases at rate $s^2,$ regardless the dependence structure between $X$ and $Y.$

\begin{proposition}
\label{prop3}
Consider $X\geq 0$ and $Y\geq 0$ with survival functions as in \eqref{assum:heavy}, with $\gamma_X>\gamma_Y.$
$$\mathbb{E}[(X-Y)^2|X\geq s]=s^2+o(s^2).$$
\end{proposition}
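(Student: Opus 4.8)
The strategy is to show that the quadratic loss is dominated by the term $\mathbb{E}[X^2 \mid X\geq s]$, and that this term behaves like $s^2$ up to lower order, while the cross term $\mathbb{E}[XY \mid X\geq s]$ and the term $\mathbb{E}[Y^2\mid X\geq s]$ are both of smaller order $o(s^2)$. Expanding the square gives
\[
\mathbb{E}[(X-Y)^2\mid X\geq s] = \mathbb{E}[X^2\mid X\geq s] - 2\mathbb{E}[XY\mid X\geq s] + \mathbb{E}[Y^2\mid X\geq s].
\]
First I would handle the leading term. Using \eqref{assum:heavy} and a Karamata-type computation (or the well-known fact that for a heavy-tailed variable with tail index $\gamma_X<1/2$ one has $\mathbb{E}[X^2\mathds{1}_{X\geq s}]\sim \frac{\gamma_X}{1-2\gamma_X}s^2 S_X(s)$ and $\mathbb{E}[\mathds{1}_{X\geq s}]=S_X(s)$), dividing the two asymptotics yields $\mathbb{E}[X^2\mid X\geq s]\sim \frac{1}{1-2\gamma_X}s^2$. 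Wait — I need to be careful about the constant: the claimed answer is exactly $s^2+o(s^2)$, so either the normalization in \eqref{assum:heavy} forces the constant to be $1$, or (more likely) the intended reading is that $s^2$ is the precise leading power with a constant that the authors absorb; I would follow Proposition \ref{prop3}'s phrasing and either re-derive the constant carefully from the precise form of $S_X$ in \eqref{assum:heavy}, or note that the statement should be read as "of order $s^2$". In any case, the key input is Karamata's theorem applied to $\int_s^\infty x\, dF_X(x)$-type integrals.

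Next I would bound the cross term. The clean way is to use $2XY \leq \varepsilon X^2 + \varepsilon^{-1}Y^2$ on the event $\{X\geq s\}$, so that it suffices to control $\mathbb{E}[Y^2\mid X\geq s]$ and show it is $o(s^2)$. For this I would write $\mathbb{E}[Y^2\mid X\geq s] = \mathbb{E}[Y^2\mathds{1}_{X\geq s}]/S_X(s)$ and split the numerator according to whether $Y\leq s^{1-\eta}$ or $Y>s^{1-\eta}$ for a small $\eta>0$: on the first part the contribution is at most $s^{2-2\eta}$, which is $o(s^2)$; on the second part I bound $\mathbb{E}[Y^2\mathds{1}_{Y>s^{1-\eta}}\mathds{1}_{X\geq s}]\leq \mathbb{E}[Y^2\mathds{1}_{Y>s^{1-\eta}}]$, and since $Y$ is heavy-tailed with index $\gamma_Y<1/2$, Karamata gives $\mathbb{E}[Y^2\mathds{1}_{Y>t}]\sim C t^2 S_Y(t)$, which is regularly varying of index $2-1/\gamma_Y<0$; evaluated at $t=s^{1-\eta}$ this decays like a negative power of $s$, hence when divided by $S_X(s)$ (which decays like $s^{-1/\gamma_X}$ up to slowly varying factors) the whole thing is still $o(s^2)$ provided $\eta$ is chosen small enough relative to the gap between $1/\gamma_Y$ and $1/\gamma_X$. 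Combining, $\mathbb{E}[Y^2\mid X\geq s]=o(s^2)$, and then $\mathbb{E}[XY\mid X\geq s]\leq (\varepsilon \cdot \frac{s^2}{1-2\gamma_X} + \varepsilon^{-1}o(s^2))/2$, which after letting $s\to\infty$ and then $\varepsilon\to 0$ is also $o(s^2)$.

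The main obstacle I anticipate is making the split in the $\mathbb{E}[Y^2\mid X\geq s]$ bound fully rigorous: the factor $\mathds{1}_{X\geq s}$ in $\mathbb{E}[Y^2\mathds{1}_{Y\leq s^{1-\eta}}\mathds{1}_{X\geq s}]$ must be retained to keep the normalizing probability $S_X(s)$ meaningful — i.e. I cannot simply drop $\mathds{1}_{X\geq s}$ there or the bound becomes vacuous, so that term needs $\mathbb{E}[Y^2\mathds{1}_{Y\leq s^{1-\eta}}\mathds{1}_{X\geq s}]\leq s^{2(1-\eta)}S_X(s)$, which is exactly what is needed. A secondary subtlety is that these asymptotics require choosing $\eta$ with $2(1-\eta)<2$ (automatic) and $(1-\eta)(1/\gamma_Y - 2) > 1/\gamma_X$ so that the tail part beats $S_X(s)$; since $1/\gamma_Y - 2 > 1/\gamma_X - 2$ and we need the product to exceed $1/\gamma_X$, one checks this holds for $\eta$ small whenever $1/\gamma_Y - 1/\gamma_X > 2\gamma_Y/(1-2\gamma_Y)$ or similar — I would verify the exact inequality, but morally $\gamma_X>\gamma_Y$ with both below $1/2$ is what drives it, and the slowly varying functions $\ell_X,\ell_Y$ only contribute $s^{o(1)}$ corrections that are harmless by Potter bounds.
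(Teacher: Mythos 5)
Your route is genuinely different from the paper's. The paper never expands the square: it sets $Z=X-Y$, proves via a sandwich on $S_Z$ (Lemma \ref{l1} and Proposition \ref{prop_tail}) that $Z$ is regularly varying with the same index $-1/\gamma_X$ as $X$, and then lower-bounds $\mathbb{E}[Z^2\mid X\geq s]$ by restricting to the event $\{Z\geq s\}\subset\{X\geq s\}$, so that it can work directly with $\mathbb{E}[Z^2\mid Z\geq s]$; the cross term $\mathbb{E}[XY\mid X\geq s]$ never appears. Your decomposition $X^2-2XY+Y^2$, with Karamata for the leading term, a truncation at $Y\leq s^{1-\eta}$ for $\mathbb{E}[Y^2\mid X\geq s]$, and Cauchy--Schwarz (or Young) for the cross term, is sound. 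The one detail left open is the choice of $\eta$, and it is simpler than you suggest: after dividing by $S_X(s)$ the tail part contributes $s^{(1-\eta)(2-1/\gamma_Y)+1/\gamma_X+o(1)}$, which is $o(s^2)$ as soon as $\eta\,(1/\gamma_Y-2)<1/\gamma_Y-1/\gamma_X$; such an $\eta>0$ always exists under $\gamma_Y<\gamma_X<1/2$, with no further condition on the gap.

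The worry you flag about the constant is well founded, and your method actually settles it: Karamata gives $\mathbb{E}[X^2\mid X\geq s]\sim s^2/(1-2\gamma_X)$, so your argument yields $\mathbb{E}[(X-Y)^2\mid X\geq s]\sim s^2/(1-2\gamma_X)$ rather than $s^2+o(s^2)$. The same constant is silently dropped in the paper's own proof, which asserts $\mathbb{E}[Z^2\mid Z^2\geq s^2]=s^2+o(s^2)$ for a variable whose square has tail index $2\gamma_X$, whereas the mean-excess asymptotic is $s^2/(1-2\gamma_X)\,(1+o(1))$; moreover that proof only establishes a lower bound. The statement is therefore best read as $\asymp s^2$ (or as $\geq s^2+o(s^2)$, which is what the paper actually proves). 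Your version is the sharper one---a two-sided asymptotic with the exact constant---at the price of having to control the cross term, which the paper's $Z$-based route avoids entirely.
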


From these theoretical results, we see that there is generally a large gap between the payoff $Y$ and the actual loss $X$ when the variables are heavy-tailed. We have assumed here that $\gamma_X>\gamma_Y,$ which seems to be the most interesting case since, in this situation, the risk taken by parametric insurance is less volatile than the initial risk. In the opposite situation, the parametric product would tend to overcompensate the actual loss. The $\gamma_X=\gamma_Y$ situation seems purely theoretical, as it would require a very special situation in which the two tail indices are exactly the same.

Finally, let us mention that all the results of this section extend to the case where $X$ is heavy-tailed and $Y$ is light-tailed. In this situation, the remaining terms in the asymptotic expansions are even smaller.

\section{Simulation study from the cyber risk context}
\label{sec_cyber}

The purpose of this section is to illustrate the theoretical results and to go beyond the asymptotic approximations we have derived. Simulation settings are inspired from issues in the cyber insurance domain. Cyber insurance is an area where parametric insurance is increasingly mentioned as a promising tool to design contracts adapted to the complexity of the risk \citep[see e.g.][Chapter 5]{dal2020towards,occ}. Various indicators have been proposed to track risk, whether it is the frequency or severity of an event. Here we build a simulation setting inspired (in terms of the distribution used) by calibrations performed in the cyber domain, more precisely in the case of data breaches. This context offers a natural physical parameter which describes the severity. The results that we proposed are based on previous work of the volume of leaked data and its link with the cost of an event, and can be understood as a sensitivity analysis of the models that are considered. { In Section \ref{sec_number_records}, we give a brief presentation of this context. The simulation settings we consider are described in Section \ref{sec_illustration}, with a focus on the copula models used in Section \ref{sec_copulas}. The results and analyses are presented in Section \ref{sec_results}}.

\subsection{Description of data breaches through number of records}
\label{sec_number_records}

Of the various types of situations that underlie the concept of cyber risk, data breaches are probably the ones for which the cost associated with such an event is relatively easy to assess. Indeed, the volume of data that has been breached (i.e. the ``number of records'') is a good indication of the severity. This quantity, which can be easily measured soon after a claim occurs, can be used as a parameter that should be able to give some indications on the actual loss.

\citet{cost_jacobs_formula} proposed a relationship between this number of records, say $\theta,$ and $X,$ which can be seen as the formula defining the payoff $Y.$ This relationship is of the following type
\begin{equation}
    \label{payoff}
\log Y=\alpha + \beta \log \theta,
\end{equation}
where $X$ and $Y$ are expressed in dollars. The formula was calibrated using data from the Ponemon institute \citep[][]{CODB_2018}. \citet{cost_jacobs_formula} estimated $\alpha=7.68$ and $\beta=0.76.$ Nonetheless, \citet{farkas2021cyber} pointed that the formula, computed from data collected in 2014, was inconsistent with some of the so-called ``mega-breaches'' observed afterwards. For example, two mega-breaches were reported in the 2018 CODB report  \citep[][]{CODB_2018}. The first, with 1 million records breached, would result in an estimated cost of nearly \$79 million, while the actual cost was approximately \$39 million. The largest one, with 50 million records breached, would lead to $Y=1.5$ billion dollars, far from the  \$350 million paid out. This is why \citet{farkas2021cyber} proposed a (very rough) recalibration of the parameters, taking $\alpha=9.59$ and $\beta=0.57.$

Behind this discussion, we see that, although we are dealing with an indicator (the number of records) that seems physically relevant to describe the magnitude of the event under consideration, the payoff function $Y$ may be a very rough approximation of the actual loss, especially in the tail. In addition to the variance inherent in the calibration error of \eqref{payoff}, and the potential lack of fit of the model, we see that there are many uncertainties regarding the estimation of the parameters.

The examples we use in the following are inspired by this example, and the corresponding parameter values.

\subsection{Difference between the basis risk and the parametric coverage}
\label{sec_illustration}

In this section, we illustrate the theoretical results of Propositions \ref{prop1} and \ref{prop3} on simulations inspired by the relationship (\ref{payoff}) between the expected cost of a data breach and the number of records. Different dependence structures are considered.

\paragraph{Simulation setting} In this paragraph, we describe our main simulation setting.
First, in order to consider reasonable values for our example, we need an appropriate distribution for the parameter $\theta.$ We choose to simulate $\theta$ according a simple Pareto distribution with parameters $b=0.7$ and $u=7.10^4$, as considered by \citet{maillart}, namely
$$\mathbb{P}(\theta\geq t)=\left(\frac{u}{t}\right)^{b}, \; \mathrm{for} \; t\geq u.$$
 This choice of  heavy-tailed distributions is consistent with the work of \citet{maillart}, and with the analysis of \citet{farkas2021cyber} based on more recent data (a significant class of the data breaches has been identified to follow a distribution close to that considered by \citet{maillart}).

Then, if the payoff $Y$ is defined according to \eqref{payoff}, $Y$ inherits the heavy-tail property of $\theta$. More precisely, if $\theta$ is distributed as Pareto distribution with parameters $u$ and $b$, then $Y$ is distributed according to a Pareto distribution with parameters $u'=u\exp(\alpha)$ and $b' = b/\beta$. 
The parameter $\beta$ is of course the most important when focusing on the tail of the distribution, since it is directly linked to the tail index $\gamma_Y=\beta/b.$ In this work, the coefficients are chosen $\alpha=9.59$ and $\beta=0.5,$ that is slightly lower than the parameter $\beta$ calibrated in \citep{farkas2021cyber}. 

{Now, to simulate the actual loss $X$, we consider that $X$ is linked to the volume of breached data, yet we do not have access to the actual number of records. We thus suppose that the actual loss $X$ also satisfies the relation \eqref{payoff} but with a different parameter $\theta'$ and different coefficients $\alpha'$ and $\beta'$. The parameter $\theta'$ is also simulated according to a Pareto distribution with parameters $u$ and $b'-0.1$. The choice of $b'-0.1$ is motivated by the fact that it corresponds to the margin of error given by \citet{maillart}. Then, the dependence structure between $X$ and $Y$ is created through the dependence between $\theta$ and $\theta'$ described by a copula function. The different copula families and how the corresponding parameters were chosen are explained in the next paragraph.}


\paragraph{Copula families}
\label{sec_copulas}

We consider three families of copulas, corresponding to different types of dependence structure. The functions of copula are summarized in Table \ref{tab_copula}.

\begin{table}[!h]
\footnotesize
    \centering
    \begin{tabular}{c|c|c|c|c}
         Copula family & Copula function & $\delta\in$ & $\tau$ & $\lambda_U$ \\
         \hline
         Clayton survival & $u+v-1+\left[\max\left(u^{-\delta}+v^{-\delta}-1,0\right)\right]^{-1/\delta}$ & $\delta\geq -1$ and $\delta\neq 0$ &  $\frac{\delta}{\delta+2}$ & $2^{-\delta}$\\
         \hline
        Gumbel & $\exp\left(-\left[\left(-\log(- u)\right)^{\delta}-\left(\log(- v)\right)^{\delta}\right]^{1/\delta}\right)$ & $\delta\geq 1$ & $\frac{\delta-1}{\delta}$ & $2-2^{-\delta}$ \\
        \hline
        Frank & $-\frac{1}{\delta}\log\left(1+\frac{(e^{-\delta u}-1)(e^{-\delta v}-1)}{e^{-\delta}-1}\right)$ & $\delta\neq 0$ & Implicit & 0 \\
    \end{tabular}
    \caption{Different copula families for the dependence structure of $(\theta, \theta')$.}
    \label{tab_copula}
\end{table}

The Frank copula family is classical and flexible, but does not allow for tail dependence. In contrast, the other two considered families, that is the Gumbel and the Clayton survival copulas, have non-zero tail dependence.

To make things comparable, we consider values of parameters that provide similar dependence structure according to an appropriate indicator. The dependence measure we use is Kendall's tau coefficient $\tau$, which is defined, for two random variables $(\theta,\theta'),$ as follows
$$\tau=\mathbb{P}((\theta_1-\theta_2)(\theta'_1-\theta'_2)>0)-\mathbb{P}((\theta_1-\theta_2)(\theta'_1-\theta'_2)<0),$$
where $(\theta_1,\theta'_1)$ and $(\theta_2,\theta'_2)$ are independent copies of $(\theta,\theta').$ This gives a simple relation between the parameter $\tau$ and the classical parametrization of the copula families.

We consider three values of the parameters for each copula family, corresponding  to $\tau=0.3,$ $\tau=0.5$ and $\tau=0.7$, respectively.

\paragraph{Simulation results}
\label{sec_results}

Figure \ref{fig1} shows the evolution of $\mathbb{E}[(X-Y)|X\geq s]$ and $\mathbb{E}[(X-Y)^2|X\geq s]$ for the different copula families with different dependence structure. Let us recall that these models only  differ  in the dependence structure between $Y$ and $X.$ In each case, $X$ and $Y$ are heavy-tailed.

\begin{figure}
    \centering
    \begin{tabular}{cc}
     \includegraphics[scale = 0.45]{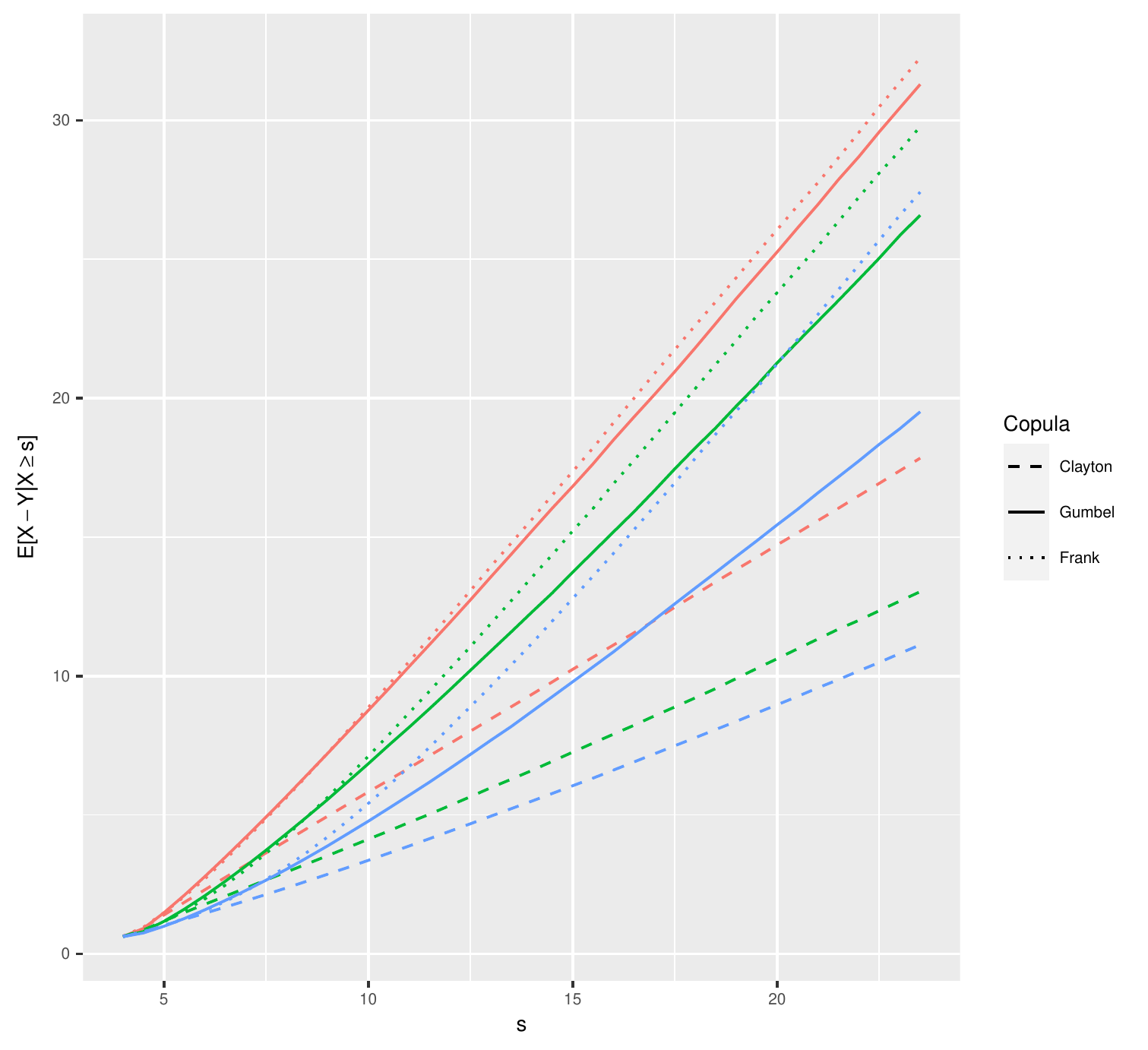} &  \includegraphics[scale = 0.45]{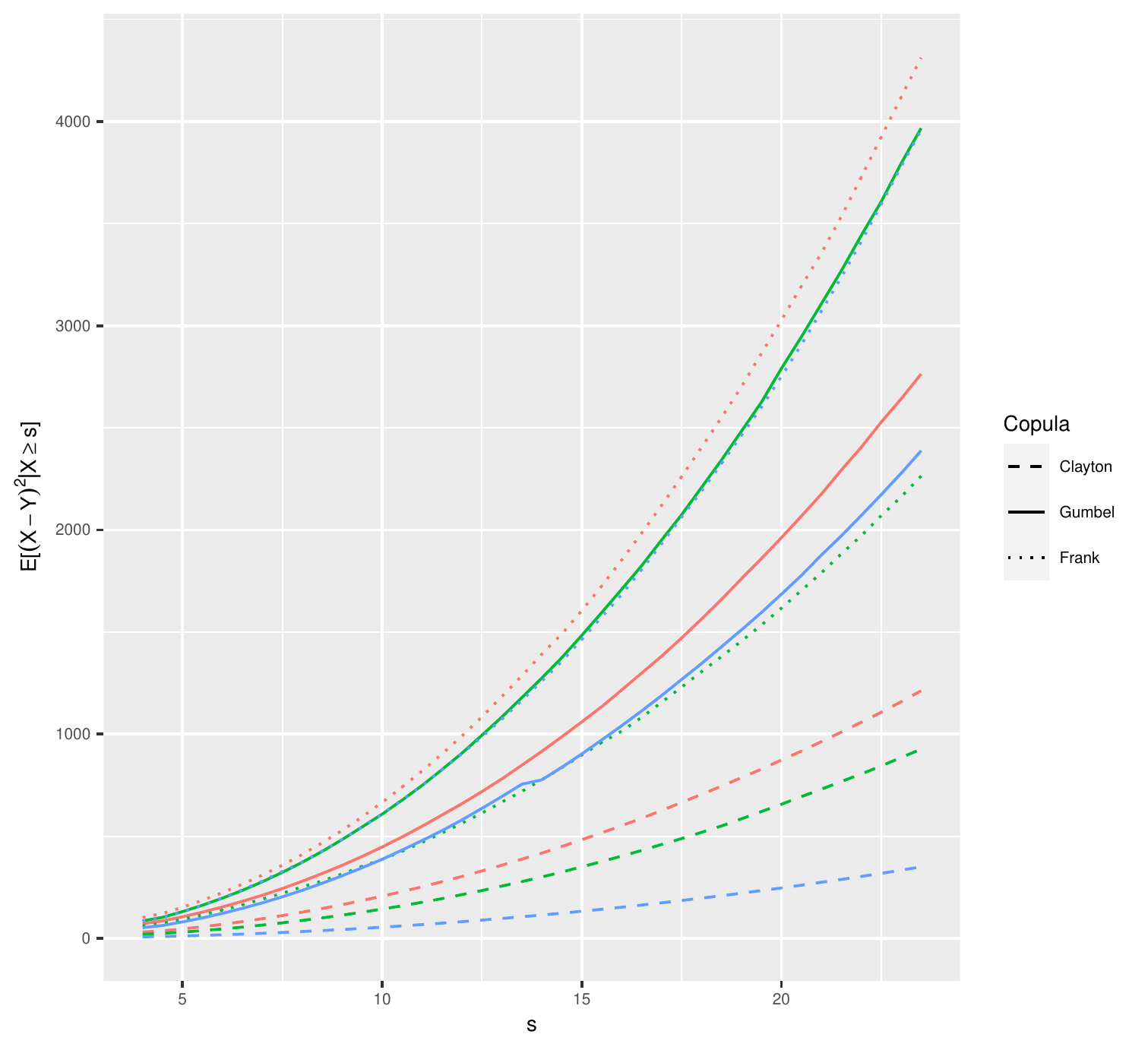} \\
     a) & b) 
     \end{tabular}
    \caption{Evolution of a) $\mathbb{E}[X-Y|X\geq s]$ and of b)  $\mathbb{E}[(X-Y)^2|X\geq s]$ with respect to the threshold $s$ for different Kendall's tau coefficients $\tau=0.3$ (red), $\tau=0.5$ (green)  $\tau=0.7$ (blue).}
    \label{fig1}
\end{figure}



We can observe that the evolution seems approximately linear for $s$ large in the case of $\mathbb{E}[(X-Y)|X\geq s],$ and approximately quadratic for $\mathbb{E}[(X-Y)^2|X\geq s]$. It is clear that the dependence structure matters, allowing to reduce the slope (which is an expected property but not covered by Proposition \ref{prop1} and \ref{prop3}). We also see in Figure \ref{fig1} a) that the slope is close to 1 for Frank's copula, and is smaller for the families that allow tail dependence. Note that a slope close to 1 is bad news: this means that the difference between $Y$ and $X$ is of the same magnitude as $X$ itself (since $X\geq s$).

These results highlight the need to use a parameter that is not only related to the basis risk via a form of ``central dependence'', but can also incorporate tail dependence. Without this property, heavy tailed variables may not be approximated properly due to the large proportion of events in the tail of the distribution.

\subsection{Impact of the heaviness of the tail}

To better understand the impact of the heaviness of the tail of the distributions and the impact of the difference between the values of the tail indices of $X$ and $Y$, we consider three different settings, denoted $B_1$, $B_2$ and $B_3$, described below.

\begin{itemize}
    \item Setting $B_1:$ $Y$ is simulated as in the main settings, but $X=Y+\varepsilon,$ with $\varepsilon\sim\mathcal{N}(0,\sigma^2_1).$ The variance $\sigma^2_1$ is taken so that $X$ has the same variance as in the main settings.
    \item Setting $B_2:$ $Y$ is simulated as in the main settings, but $\log X=\log Y+\varepsilon,$ where $\varepsilon\sim\mathcal{N}(0,\sigma^2_2).$ Again, $\sigma^2_2$ is taken so that $X$ has the same variance as in the main settings.
    \item Setting $B_3:$ $(X,Y)$ is a Gaussian vector as in (\ref{loi_gauss}), with same mean and variance as in the main settings. We consider different values for the correlation coefficient, $\rho=0.3,$ $\rho=0.5$ and $\rho=0.7.$ 
\end{itemize}

All of these benchmark cases can be thought has ``favorable cases": in $B_1$ and $B_2,$ the tail indices of $X$ and $Y$ are the same. In the first situation, we consider an additive Gaussian error, so that $X$ is relatively concentrated around $Y.$ In $B_2,$ the errors are multiplicative, since the Gaussian error is applied to the logarithms. This typically corresponds to the optimistic case where 
$\mathbb{E}[\log X|Y]=\alpha+\beta Y$: that is a standard linear regression model on the logarithm of $X$ with no misspecification error. Finally, the benchmark $B_3$ is the most optimistic case, where $X$ and $Y$ have Gaussian tails.
Next, Figure \ref{fig2} shows comparisons of $\mathbb{E}[(X-Y)\mid X \geq s]$ and $\mathbb{E}[(X-Y)^2\mid X \geq s]$ between the Clayton case and the benchmark settings. The conclusions for the other settings being similar, we postpone the figures to the appendix section (section \ref{sec_ap_b}).

\begin{figure}
    \centering
    \begin{tabular}{cc}
     \includegraphics[scale = 0.45]{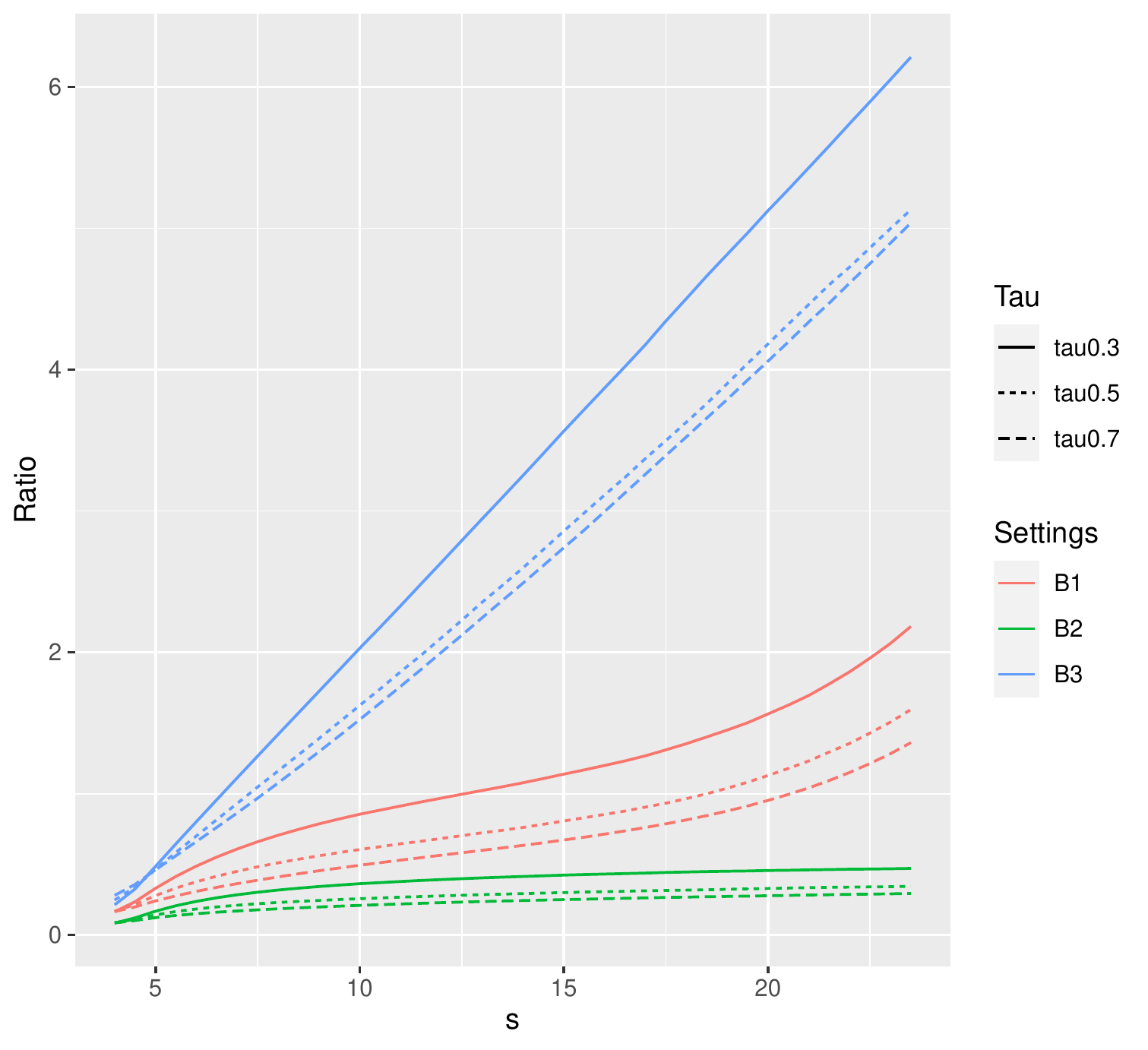} &  \includegraphics[scale = 0.45]{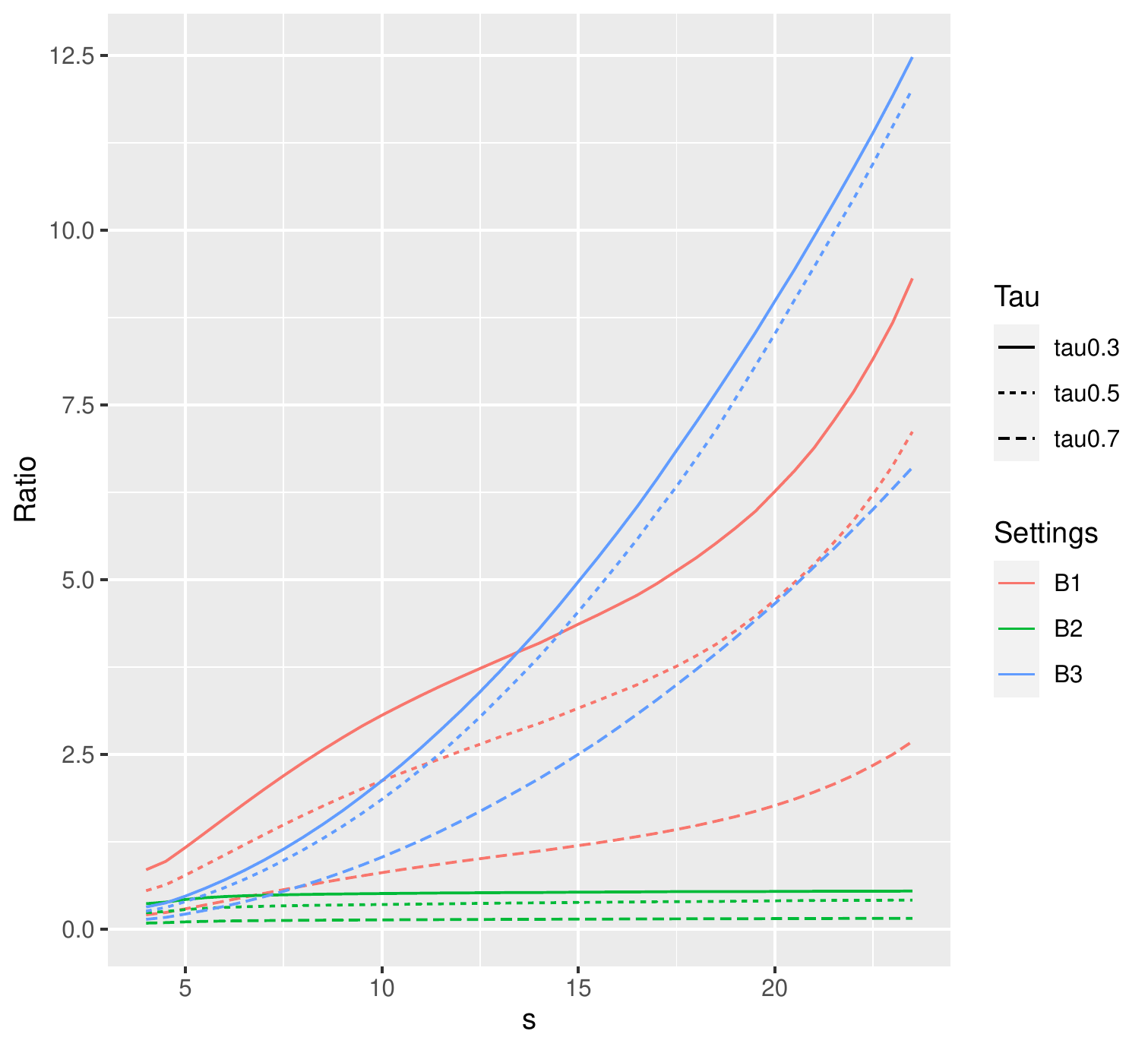} \\
     a) & b) 
     \end{tabular}
    \caption{Evolution of the ratio of a) $\mathbb{E}[(X-Y)|X\geq s]$ computed from the Clayton survival copula model, with respect to the value of $\mathbb{E}[(X-Y)|X\geq s]$  and  b) $\mathbb{E}[(X-Y)^2|X\geq s]$ computed from the Clayton survival copula model, with respect to the value of $\mathbb{E}[(X-Y)^2|X\geq s]$ obtained in the benchmark settings.}
    \label{fig2}
\end{figure}

From these figures, we see that the cases $B_1$ and $B_2$ are much favorable. In $B_3,$ we are in the Gaussian case, that is $X$ and $Y$ have low light tails and $Y$ should have the same tail, but we see that, although these cases seem very optimistic, the absence of tail dependence make the results poorer in some situations.

\section{Real data example: extreme floods} \label{sec:floods}

In this section, we illustrate our theoretical results on a real dataset, in particular, the problem created by extreme losses. The dataset comes from the global analysis of \citet{owidnaturaldisasters} on natural disasters and in particular on the share of deaths attributable to natural disasters. In this work, we chose to focus only on floods: the dataset contains 1~133 flooding events that occurred between 1950 and 2020 in 142 different countries. The dataset provides the total economic damage of each event, the country, the year and the numbers of affected people. The economic damages range from \$3 to \$40~317~000, with an average economic damage of \$788~750 and a median economic damage of \$52~403. This significant difference between the mean and the median already suggests that the economic damage variable should be heavy-tailed. 

To account for inflation, we performed a linear regression on the logarithm of the yearly mean of the economic damage against the year. The economic damages were then multiplied by $\exp(- a - b \text{Year} )$ where $a$ and $b$ are coefficients obtained in the linear regression. From now on, when we refer to economic damages, we mean deflated economic damages. 

As we wish to illustrate that parametric insurance in the context of extreme claims, we checked that the economic damage variable is indeed heavy-tailed. For that purpose, we used the so-called ``Peaks-over-Treshold'' method estimation of the tail index $\gamma$. The main idea is to choose a high threshold $u$ and fit a generalized Pareto distribution on the excesses above that threshold $u$. The estimation of the tail index $\gamma$ may be done by likelihood maximum. The choice of the threshold $u$ can be understood as a compromise between bias and variance: the smaller the threshold, the less valid the asymptotic approximation, leading to bias; on the other hand, a too high threshold will generate few excesses to fit the model, leading to high variance. The existing methods are mostly graphical, up to our knowledge, no straight-forward data-driven selection procedure is available \citep[see][for more details]{coles2001introduction}. Here, the threshold $u$ was chosen at the 0.8-quantile of the damages, equal to \$917. The tail index was estimated to be equal to 0.71. Figure \ref{fig:qqplot_expo} represents the quantile-quantile plot showing the empirical quantiles of the excesses of the economic damages above the chosen threshold against the quantiles of an exponential distribution. The fact that in the upper tail, the points are above the line $y=x$ confirms the heaviness of the tail of the distribution. 

\begin{figure}
    \centering
    \includegraphics[width = 0.5\linewidth]{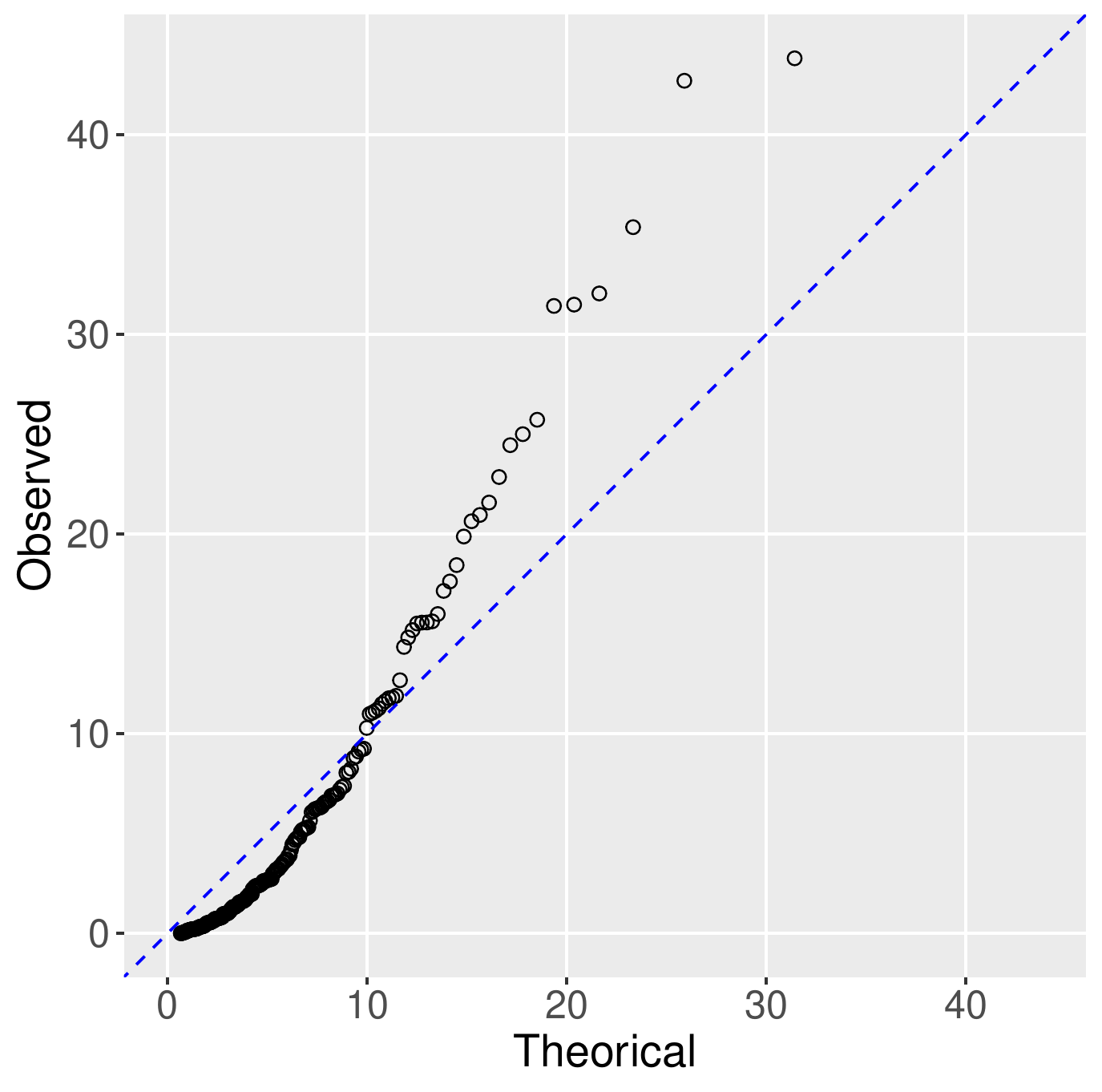}
    \caption{Quantile-quantile plot of the excesses of the economic damages against the quantiles of an exponential distribution}
    \label{fig:qqplot_expo}
\end{figure}

The actual loss $Y$ corresponds to the economic damages. The total number of affected people in a given country plays the role of the parameter $\theta$. We need to propose a function $\phi$ that links $Y$ and $\theta$, we choose the use a Classification and Regression Tree (CART) which will perform a regression of $Y$ against the number of affected people and the country in order to predict an average economic damage given the number of affected people for a given country \citep[see][for details on the CART algorithm]{breiman1984cart}. 

To illustrate the theoretical results of this paper, we performed 10-fold cross-validation, that is the data were partitioned into 10 subsamples. Of the 10 subsamples, one is kept aside as a test set and the model is trained on the 9 remaining subsamples. The cross-validation process is then repeated 10 times, with each of the 10 subsamples used exactly once as the validation data. The validation is done using the quadratic error for each prediction, that is the squared difference between the real economic damage and the predicted economic damage. At the end of the whole cross-validation procedure, classes of the real economic damages are formed (the classes are formed so that there is 10\% of the data in each class). For each class, the square root of the mean of the quadratic errors (RMSE) are computed. Figure $\ref{fig:rmse}$ shows the boxplots of the RMSE for each cost class, illustrating the fact that prediction of the cost using parametric insurance is challenging for severe costs.

\begin{figure}
    \centering
    \includegraphics[width = 0.8\linewidth]{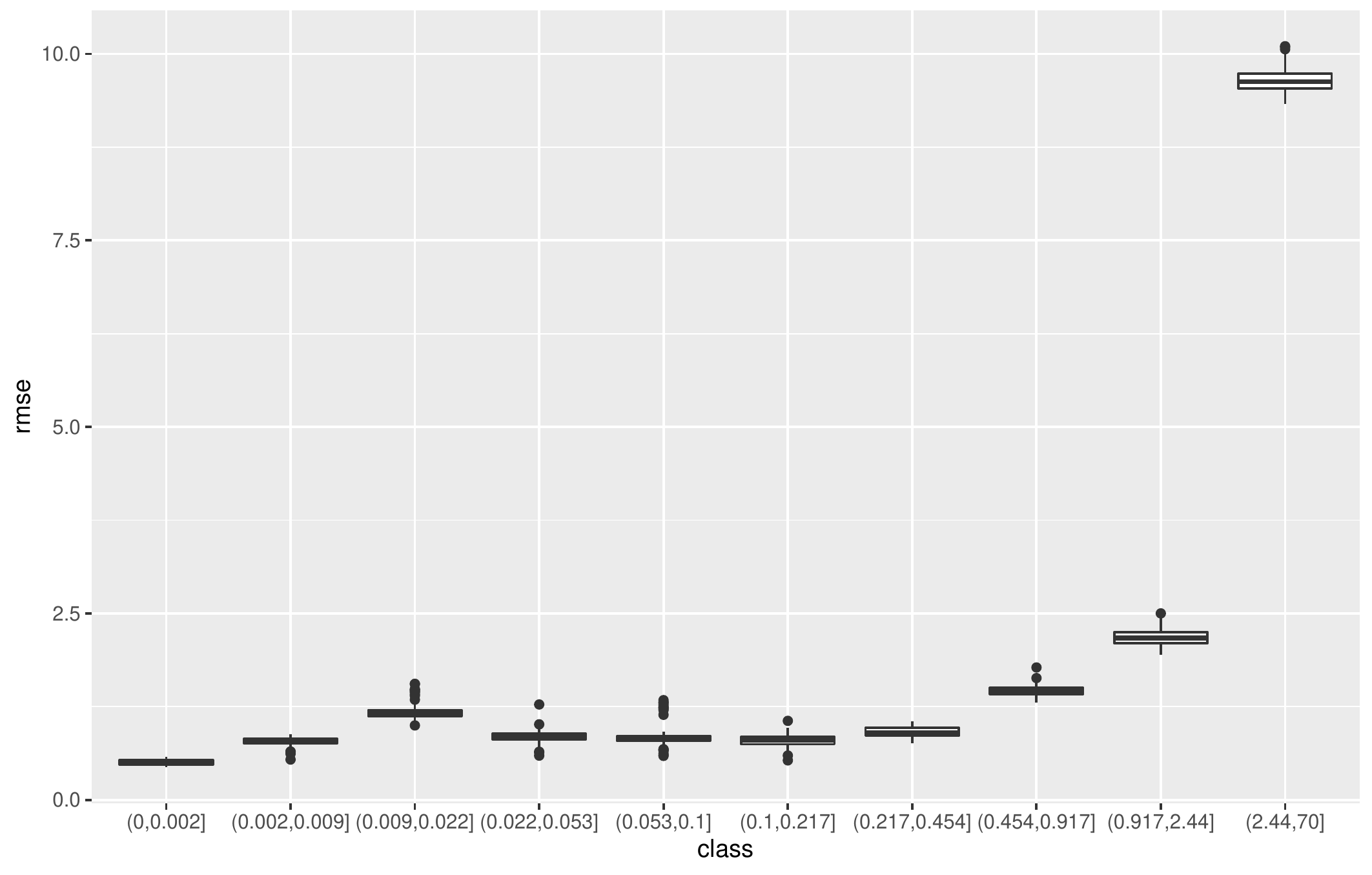}
    \caption{Boxplots of the root of the  mean squared error within each cost class for the floods data}
    \label{fig:rmse}
\end{figure}

\section{Conclusion}

In this paper, we have investigated what factors can disturb the parametric insurance project to properly cover the basis risk in case of large claims. This question can be seen of the ability of one random variable to approximate another. In particular, we have focused on the issue created by heavy-tailed losses. In these cases, the difference between what is paid to the policyholder and the actual loss may be quite large, especially if the tail index is even slightly mis-estimated. The ability of parametric insurance to reduce the variance will be diminished in this case---unless one agrees to provide poorer coverage for large claims---since no reduction of the tail index compared to the basis risk leads to, more or less, the same kind of variability. Next, the dependence structure between the parameter and the true loss seems to be an important issue to address. Tail dependence appears to be essential in order to obtain a correct approximation of the losses for large claims. It should be mentioned that designing a tail-dependent parameter for the basis risk is a challenging task: the analysis of ``extreme" events requires lots of data, which pleads for a careful statistical analysis to properly define the appropriate metric. Alternatively, a specific treatment for large claims could be considered, in order not to create too large a gap between the expectations of the policyholders and the amount compensated . Such a mixed approach---parametric coverage complemented by special treatment of large claims---could be essential to successfully deploy such strategies when facing heavy tail risk.

\section{Appendix}
\label{sec_app}

The Appendix section is organized in the following way. Section \ref{sec_ap_gauss} is devoted to the results of Gaussian variables, while the results regarding variables with Pareto tail are gathered in Section \ref{sec_ap_p}. The additional comparisons with the benchmarks of the simulation study are shown in section \ref{sec_ap_b}.

\subsection{Results on Gaussian variables}
\label{sec_ap_gauss}

\subsubsection{Proof of Proposition \ref{prop_gauss1}}

First recall that $X$ is distributed according to the distribution $\mathcal N(\mu_X, \sigma^2_X)$ so that 
\begin{equation}
\label{mrlg}
\mathbb{E}[X|X\geq s]=\mu_X+\sigma_X^2 h(s|\mu_X,\sigma_X^2),
\end{equation}
where $h(s|\mu_X,\sigma_X^2)$ is the hazard rate of a Gaussian random variable with mean $\mu_X$ and variance $\sigma_X^2,$ that is
$$h(s|\mu_X,\sigma_X^2)=\frac{\exp\left(-\frac{(s-\mu_X)^2}{2\sigma_X^2}\right)}{\sqrt{2\pi \sigma^2_X}\bar{\Phi}\left(\frac{s-\mu_X}{\sigma_X}\right)}\underset{s \to +\infty}{\sim} \frac{s-\mu_X}{\sigma_X^2},$$
with $\bar \Phi$ the survival function of the standard Gaussian distribution $\mathcal N(0,1)$.

 First of all, let
$m(s)=\mathbb{E}[(X-Y)\mathbf{1}_{X\geq s}].$ 
Since $$\mathbb{E}[Y|X]=\mu_Y+\frac{\rho \sigma_Y}{\sigma_X}(X-\mu_X),$$
we have
$$m(s)=\left(1-\frac{\rho \sigma_Y}{\sigma_X}\right)E\left[X\mathbf{1}_{X\geq s}\right]-\left(\mu_Y-\frac{\rho \sigma_Y}{\sigma_X}\mu_X\right)\mathbb{P}(X\geq s).$$

From (\ref{mrlg}), we get
\begin{eqnarray*}
\mathbb{E}[X-Y|X\geq s] &=& (\mu_X-\mu_Y)+\left(1-\frac{\rho \sigma_Y}{\sigma_X}\right)\sigma_X^2 h(s|\mu_X,\sigma_X^2) \\
&\sim & (\mu_X-\mu_Y)+\left(1-\frac{\rho \sigma_Y}{\sigma_X}\right)(s-\mu_X),
\end{eqnarray*}
as $s \to \infty$.

\subsubsection{Proof of Proposition \ref{prop_gauss2}}
\label{sec_preuve_gauss2}

We have
$$\mathbb{E}[(X-Y)^2|X\geq s]=\mathbb{E}[X^2|X\geq s]+\mathbb{E}[Y^2|Y\geq s]-2\mathbb{E}[XY|X\geq s].$$
Moreover,
$$m_2(s)=\mathbb{E}[X^2|X\geq s]=\left(\sigma_X^2+\mu_X^2\right)+\sigma_X^2 h(s|\mu_X,\sigma_X^2)(s+\mu_X),$$
and
$$\mathbb{E}[XY|X\geq s]=\mathbb{E}[X\mathbb{E}[Y|X]|X\geq s].$$
From (\ref{mrlg}),
$$\mathbb{E}[XY|X\geq s]=\left(\mu_Y-\rho \frac{\sigma_Y}{\sigma_X}\right)(\mu_X+\sigma_X^2h(s|\mu_X,\sigma_X^2))+\rho\frac{\sigma_Y}{\sigma_X}m_2(s).$$
Finally,
$$\mathbb{E}[Y^2|X\geq s]=\mathbb{E}[\mathrm{Var}(Y|X)|X\geq s]+\mathbb{E}[\mathbb{E}[Y|X]^2|X\geq s],$$
which leads to
\begin{eqnarray*}\mathbb{E}[Y^2|X\geq s] &=& \sigma_Y^2(1-\rho^2)+\left(\mu_Y-\rho\mu_X\frac{\sigma_Y}{\sigma_X}\right)^2 \\
&&+\rho^2\frac{\sigma_Y^2}{\sigma_X^2}m_2(s)+2\rho \frac{\sigma_Y}{\sigma_X}(\mu_Y-\rho\mu_X\frac{\sigma_Y}{\sigma_X})(\mu_X+\sigma_X^2h(s|\mu_X,\sigma_X^2)).
\end{eqnarray*}
Hence, we see that
\begin{eqnarray*}
\mathbb{E}[(X-Y)^2|X\geq s] &=& \left(1-\rho \frac{\sigma_Y}{\sigma_X}\right)^2m_2(s)+o(m_2(s)) \\
&=& \left(1-\rho \frac{\sigma_Y}{\sigma_X}\right)^2sh(s|\mu_X,\sigma_X^2) +o(sh(s|\mu_X,\sigma_X^2)) \\
&\sim& \left(1-\rho \frac{\sigma_Y}{\sigma_X}\right)^2\frac{s^2}{\sigma^2_X},
\end{eqnarray*}
 as $s \to \infty$.
\subsection{Results on heavy-tail variables}
\label{sec_ap_p}

\subsubsection{A preliminary result}

We start with a preliminary result showing that the variable $X-Y$ has the same tail index as $X,$ under the assumptions of Propositions \ref{prop1} to \ref{prop3}. Lemma \ref{l1} consists in providing upper and lower bounds for the survival function of $X-Y$. 

\begin{lemma}
\label{l1}
Let $X,$ $Y$ be as defined in Proposition \ref{prop1}, and let $Z=X-Y$.
Then, introducing $S_Z(t)=\mathbb{P}(Z\geq t),$ for $t\geq 0,$
$$h^{-1/\gamma_X}t^{-1/\gamma_X}\ell_X(th)-(h-1)^{-1/\gamma_Y}t^{-1/\gamma_Y}\ell_Y(t(h-1))\leq S_Z(t)\leq t^{-1/\gamma_X}\ell_X(t),$$
with $h>0$ fixed.
\end{lemma}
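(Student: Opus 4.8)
The plan is to establish both inequalities by elementary set-inclusion arguments, carrying the slowly-varying factors along unchanged; no limiting or Karamata-type machinery is required at this stage, and continuity of the margins plays no real role.

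For the upper bound, since $Y\geq 0$ one has $\{Z\geq t\}=\{X-Y\geq t\}\subseteq\{X\geq t\}$, whence
$$S_Z(t)=\mathbb{P}(X-Y\geq t)\leq\mathbb{P}(X\geq t)=S_X(t)=t^{-1/\gamma_X}\ell_X(t),$$
which is the right-hand inequality. For the lower bound I fix $h>1$ and use the inclusion $\{X\geq th\}\cap\{Y\leq t(h-1)\}\subseteq\{X-Y\geq t\}$, valid because on the left-hand event $X-Y\geq th-t(h-1)=t$. Writing $\mathbb{P}(X\geq th)=\mathbb{P}\big(X\geq th,\,Y\leq t(h-1)\big)+\mathbb{P}\big(X\geq th,\,Y> t(h-1)\big)\leq\mathbb{P}\big(X\geq th,\,Y\leq t(h-1)\big)+\mathbb{P}\big(Y> t(h-1)\big)$ then gives
$$S_Z(t)\geq\mathbb{P}\big(X\geq th,\,Y\leq t(h-1)\big)\geq\mathbb{P}(X\geq th)-\mathbb{P}\big(Y>t(h-1)\big).$$
It remains to substitute $\mathbb{P}(X\geq th)=S_X(th)=h^{-1/\gamma_X}t^{-1/\gamma_X}\ell_X(th)$ and $\mathbb{P}(Y>t(h-1))\leq S_Y(t(h-1))=(h-1)^{-1/\gamma_Y}t^{-1/\gamma_Y}\ell_Y(t(h-1))$ to recover the left-hand inequality.

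There is essentially no obstacle in the argument itself; the only points needing a word of care are that the lower bound is informative only for $h>1$ (so that $h-1>0$ and the normalising constants are well defined) and for $t>0$ (so that the right-hand sides are finite), and that, since $\gamma_X>\gamma_Y$, choosing $h$ fixed but large --- or letting $h$ grow slowly with $t$ --- makes the first term on the left-hand side dominate the second, so that $S_Z(t)$ is squeezed between two expressions both of the form $t^{-1/\gamma_X}$ times a slowly varying function. This is precisely the statement that $Z=X-Y$ inherits the heavier tail index $\gamma_X$ of $X$, which is what is subsequently invoked to control $\mathbb{E}[X-Y\mid X\geq s]$ and $\mathbb{E}[(X-Y)^2\mid X\geq s]$ in Propositions \ref{prop1}--\ref{prop3}.
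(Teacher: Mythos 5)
Your proof is correct and follows essentially the same route as the paper's: the upper bound via $\{X-Y\geq t\}\subseteq\{X\geq t\}$ (using $Y\geq 0$), and the lower bound via the inclusion $\{X\geq th\}\cap\{Y\leq t(h-1)\}\subseteq\{X-Y\geq t\}$ together with $\mathbb{P}(A\cap B)\geq \mathbb{P}(A)-\mathbb{P}(B^c)$. Your side remark that the lower bound only makes sense for $h>1$ is a valid (and welcome) sharpening of the lemma's ``$h>0$ fixed'' phrasing.
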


\begin{proof}
Since $Y\geq 0$ almost surely, $X-Y\geq t$ implies that $X\geq t.$ Hence, we get $S_Z(t)\leq S_X(t),$ and the upper bound of Lemma \ref{l1} is obtained.

To obtain the lower bound, introduce the event, for $h>0$ fixed,
$$E_{h}(t)=\left\{ \{X\geq th\} \cap \{Y\leq t(h-1)\}\right\}.$$
We have
$S_Z(t)\geq \mathbb{P}(E_h(t)).$ Next,
\begin{eqnarray*}\mathbb{P}(E_h(t)) &\geq & \mathbb{P}(X\geq th)-\mathbb{P}(Y\geq t(h-1))=
S_X(th)-S_Y(t(h-1)).
\end{eqnarray*}
This shows the lower bound in Lemma \ref{l1}.
\end{proof}

We can now apply this lemma to obtain the following proposition.

\begin{proposition}
\label{prop_tail}
Let $X$ be a heavy tail variable with $\gamma_X>0.$ Let $Y$ be a non negative variable with tail index $\gamma_Y<\gamma_X.$ If $\gamma_X>\gamma_Y,$ $Z=X-Y$ is heavy-tailed with tail index $\gamma_X.$
\end{proposition}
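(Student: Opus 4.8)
The plan is to deduce the statement directly from the two-sided estimate of Lemma \ref{l1}, by showing that the right tail of $Z=X-Y$ is in fact asymptotically equivalent to that of $X$. Recall that heavy-tailedness concerns only the upper tail $S_Z(t)=\mathbb{P}(Z\geq t)$ for $t\geq 0$ (even though $Z$ may be negative), and that by \eqref{assum:heavy} the function $S_X(t)=\ell_X(t)t^{-1/\gamma_X}$ is regularly varying with index $-1/\gamma_X$. So it suffices to prove $S_Z(t)/S_X(t)\to 1$ as $t\to\infty$, since asymptotic equivalence preserves regular variation.

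The first step is an elementary comparison of the two marginal tails. Since $\gamma_X>\gamma_Y>0$, write
$$\frac{S_Y(t)}{S_X(t)}=\frac{\ell_Y(t)}{\ell_X(t)}\,t^{1/\gamma_X-1/\gamma_Y},$$
where $\ell_Y/\ell_X$ is slowly varying (a ratio of slowly varying functions, $\ell_X$ being positive for large $t$) and the exponent $1/\gamma_X-1/\gamma_Y$ is strictly negative. Using the standard fact that a slowly varying function is dominated by every positive power of $t$ (Potter-type bound), this gives $S_Y(t)/S_X(t)\to 0$; combining with regular variation of $S_X$, one also gets $S_Y(ct)/S_X(t)\to 0$ for every fixed $c>0$. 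Next I feed this into Lemma \ref{l1}. The upper bound there is $S_Z(t)\leq S_X(t)$, hence $\limsup_{t\to\infty}S_Z(t)/S_X(t)\leq 1$. For the lower bound, fix $h>1$; Lemma \ref{l1} gives $S_Z(t)\geq S_X(th)-S_Y(t(h-1))$, so
$$\frac{S_Z(t)}{S_X(t)}\geq \frac{S_X(th)}{S_X(t)}-\frac{S_Y(t(h-1))}{S_X(t)},$$
and the right-hand side converges to $h^{-1/\gamma_X}-0$ as $t\to\infty$, by regular variation of $S_X$ for the first term and by the previous paragraph for the second. Thus $\liminf_{t\to\infty}S_Z(t)/S_X(t)\geq h^{-1/\gamma_X}$ for every $h>1$, and letting $h\downarrow 1$ yields $\liminf_{t\to\infty}S_Z(t)/S_X(t)\geq 1$. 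Together with the upper bound, $S_Z(t)/S_X(t)\to 1$, so $S_Z$ is regularly varying with index $-1/\gamma_X$, i.e. $Z$ is heavy-tailed with tail index $\gamma_X$.

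The only mildly delicate point is the first step, namely justifying that $S_Y(ct)/S_X(t)$ genuinely vanishes in the limit; but this is exactly the Potter bound for slowly varying functions ($t^{-\varepsilon}L(t)\to 0$ for any $\varepsilon>0$ and any slowly varying $L$) combined with the regular variation of $S_X$. Everything else is bookkeeping with the bounds already supplied by Lemma \ref{l1}.
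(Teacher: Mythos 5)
Your proof is correct and follows essentially the same route as the paper's: both rest on the two-sided bound of Lemma \ref{l1} together with the device of letting $h\downarrow 1$ after observing that $S_Y(t(h-1))/S_X(t)\to 0$ because $\gamma_Y<\gamma_X$. The only difference is cosmetic but in your favor: you compare $S_Z(t)$ directly to $S_X(t)$ and obtain the slightly stronger tail equivalence $S_Z(t)\sim S_X(t)$, whereas the paper verifies slow variation of $\ell_Z(t)=t^{1/\gamma_X}S_Z(t)$ by bounding the ratio $\ell_Z(tx)/\ell_Z(t)$.
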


\begin{proof}

Let $\ell_Z(t)=t^{1/\gamma_X}S_Z(t).$ Let us proof that $\ell_Z$ is slowly-varying. From Lemma \ref{l1}, for $x>1$ and for all $h>10$, 
\begin{eqnarray}
\frac{\ell_Z(tx)}{\ell_Z(t)}  
&\le& \frac{\ell_X(tx)}{h^{-1/\gamma_X}\ell_X(th)-(h-1)^{-1/\gamma_X}(t(h-1))^{1/\gamma_X-1/\gamma_Y}\ell_Y(t(h-1))} \, , \label{force1}
\end{eqnarray}

and,
\begin{equation} \frac{\ell_X(thx)}{h^{1/\gamma_X}\ell_X(t)}-\frac{x^{1/\gamma_X}(tx(h-1))^{-1/\gamma_Y}\ell_Y(tx(h-1))}{t^{-1/\gamma_X}\ell_X(t)} \leq \frac{\ell_Z( tx)}{\ell_Z(t)}.
\label{force2}
\end{equation}

Hence, for all $x>0$ and $h>1,$ taking the limit of the right-hand side of (\ref{force1}),
$$\lim \sup_{t\rightarrow \infty} \frac{\ell_Z(tx)}{\ell_Z(t)}\leq h^{1/\gamma_X}.$$
To see that, let $\beta=\gamma_Y^{-1}-\gamma_X^{-1}>0.$ We have
$t^{-\beta/2}l_Y(t(h-1))\rightarrow_{t\rightarrow\infty} 0,$ and $t^{\beta/2}l_X(tx)\rightarrow_{t\rightarrow\infty}\infty.$ 

Similarly, from (\ref{force2}), we get
$$\lim \inf_{t\rightarrow \infty} \frac{\ell_Z(tx)}{\ell_Z(t)}\geq \frac{1}{h^{1/\gamma_X}}.$$
This is valid for all $h>1.$ Next, let $h$ tend to 1 in order to obtain that, for all $x,$ $\lim_{t\rightarrow \infty} \ell_Z(tx)/\ell_Z(t)=1,$ leading to the result.

\end{proof}

\subsubsection{Proof of Proposition \ref{prop1}}

From Proposition \ref{prop_tail}, $Z=X-Y$ is heavy-tailed with tail index $\gamma_X$, that is $S_Z(t)=\ell_Z(t)t^{-1/\gamma_X}$ with $\ell_Z$ slow-varying. Hence,
\begin{equation}
\label{mrl}
    \mathbb{E}[Z|Z\geq s]=s+o(s).
\end{equation}

Next,
\begin{eqnarray*}\mathbb{E}[Z|X\geq s] &=& \mathbb{E}[Z|X\geq s, Z\geq s]\mathbb{P}(Z\geq s|X\geq s) \\&& + \mathbb{E}[Z|X\geq s,Z<s]\mathbb{P}(Z< s|X\geq s).
\end{eqnarray*}
Since $Z\geq s$ implies $X\geq s,$ we have
$$\mathbb{E}[Z|X\geq s, Z\geq s]\mathbb{P}(Z\geq s|X\geq s)=\mathbb{E}[Z|Z\geq s]\mathbb{P}(Z\geq s|X\geq s).$$

Moreover,
$$\mathbb{P}(Z\geq s|X\geq s)=\frac{\mathbb{P}\left(\{Z\geq s\}\cap \{X\geq s\}\right)}{\mathbb{P}(X\geq s)}=\frac{\mathbb{P}(Z\geq s)}{\mathbb{P}(X\geq s)}.$$

From this, we get
$$\mathbb{E}[Z|X\geq s]\geq \mathbb{E}[Z|Z\geq s]\times \frac{\ell_Z(s)}{\ell_X(s)}=(s+o(s))\times \frac{\ell_Z(s)}{\ell_X(s)},$$
from (\ref{mrl}).
From Lemma \ref{l1}, we see that, taking for example $x=2,$
$$\frac{\ell_Z(s)}{\ell_X(s)}\geq \frac{1}{2^{1/\gamma_X}}\frac{\ell_X(2s)}{\ell_X(s)}-\frac{1}{s^{\beta}}\frac{\ell_Y(s)}{\ell_X(s)},$$ introducing $\beta=\gamma_Y^{-1}-\gamma_X^{-1}>0.$ Since $\ell_X$ and $\ell_Y$ are slow varying, $s^{\beta/2}\ell_X(s)\rightarrow \infty,$ $s^{-\beta/2}\ell_Y(s)\rightarrow 0$ as $s$ tends to infinity, leading to
$$\frac{1}{s^{\beta}}\frac{\ell_Y(s)}{\ell_X(s)}=o(1).$$ Moreover,
$$\lim_{s \rightarrow \infty}\frac{\ell_X(2s)}{\ell_X(s)}=1,$$ showing that there exists a constant $c>0$ such that, for $s$ large enough,
$$\mathbb{E}[Z|X\geq s]\geq c s.$$

\subsubsection{Proof of Proposition \ref{prop2}}

We have
$$\mathbb{E}[X-Y|X\geq s]=\mathbb{E}[X-\psi(X)|X\geq s],$$
where $\psi(X)=\mathbb{E}[Y|X].$
Since $X$ is heavy-tailed with tail index $\gamma_X>0,$ $\mathbb{E}[X|X\geq s]=s+o(s).$
On the other hand,
$$\mathbb{E}[\psi(X)|X\geq s]=\mathbb{E}[\psi(X)|\psi(X)\geq \psi(s)],$$
since $\psi$ is strictly non decreasing. Then, since $\psi(X)$ assumed to be heavy-tailed,
$$\mathbb{E}[\psi(X)|\psi(X)\geq \psi(s)]=\psi(s)+o(\psi(s)),$$
which shows that
$$\mathbb{E}[X-Y|X\geq s]=s-\psi(s)+o(s).$$

\subsubsection{Proof of Proposition \ref{prop3}}

Let $Z=X-Y.$ We have
$$\mathbb{E}[(X-Y)^2|X\geq s]=\frac{\mathbb{E}[Z^2\mathbf{1}_{X\geq s}]}{\mathbb{P}(X\geq s)}\geq \frac{\mathbb{E}[Z^2\mathbf{1}_{X\geq s}\mathbf{1}_{Z\geq s}]}{\mathbb{P}(X\geq s)}.$$
If $Z\geq s$, then, necessarily, $X\geq s$ since $Y\geq 0.$ Hence
$$\mathbb{E}[(X-Y)^2|X\geq s]\geq \frac{\mathbb{E}[Z^2\mathbf{1}_{Z\geq s}]}{\mathbb{P}(X\geq s)}=\mathbb{E}[Z^2|Z\geq s]\frac{\mathbb{P}(Z\geq s)}{\mathbb{P}(X\geq s)}.$$

Next,
$$\mathbb{E}[Z^2|Z\geq s]=\frac{\mathbb{E}[Z^2\mathbf{1}_{Z^2\geq s^2}]}{\mathbb{P}(Z\geq s)}-\frac{\mathbb{E}[Z^2\mathbf{1}_{Z<-s}]}{\mathbb{P}(Z\geq s)}.$$
Moreover,
$$\mathbb{E}[Z^2\mathbf{1}_{Z<-s}]\leq \mathbb{E}[Y^2\mathbf{1}_{Y\geq s}]=\mathbb{E}[Y^2|Y\geq s]\mathbb{P}(Y\geq s).$$
Combining this last equation with Proposition \ref{prop_tail} leads to
\begin{eqnarray*}
\mathbb{E}[Z^2|Z\geq s] &\geq & \mathbb{E}[Z^2|Z^2\geq s^2]\frac{\mathbb{P}(Z^2\geq s^2)}{\mathbb{P}(Z\geq s)}-\frac{\ell_Y(s)}{\ell_Z(s)}\mathbb{E}[Y^2|Y\geq s]s^{1/\gamma_X-1/\gamma_Y} \\
&\geq &\mathbb{E}[Z^2|Z^2\geq s^2]-\frac{\ell_Y(s)}{\ell_Z(s)}\mathbb{E}[Y^2|Y^2\geq s^2]s^{1/\gamma_X-1/\gamma_Y},
\end{eqnarray*}
where the last line comes from the fact that, $\mathbb{P}(Z^2\geq s^2)\geq \mathbb{P}(Z\geq s),$ and that, since $Y\geq 0$ almost surely, $\mathbb{E}[Y^2|Y\geq s]=\mathbb{E}[Y^2|Y^2\geq s^2].$

We have, since $Y\geq 0$ almost surely,
$$\mathbb{P}(Y^2\geq t)=\mathbb{P}(Y\geq t^{1/2})=\frac{\ell_Y(t^{1/2})}{t^{1/(2\gamma_Y)}},$$
where $t\rightarrow \ell_Y(t^{1/2})$ inherits the slow-varying property of $\ell_Y.$ Hence 
\begin{equation}\label{ey2}
\mathbb{E}[Y^2|Y^2\geq s^2]=s^2+o(s^2).
\end{equation}

On the other hand, let $\beta=\gamma_Y^{-1}-\gamma_X^{-1}>0.$ Since $\ell_Y(s)s^{2-\beta/2}\rightarrow 0$ when $s$ tends to infinity, and since $\ell_Z(s)s^{\beta/2}\rightarrow \infty.$ Hence, using (\ref{ey2}),
$$\frac{\ell_Y(s)}{\ell_Z(s)}\mathbb{E}[Y^2|Y^2\geq s^2]s^{1/\gamma_X-1/\gamma_Y}= s^{2-\beta/2}+o(s^{2-\beta/2}).$$

Since $\mathbb{E}[Z^2|Z^2\geq s^2] = s^2+o(s^2),$ the result follows.

\newpage

\subsection{Additional comparisons with benchmarks}
\label{sec_ap_b}

\begin{figure}
    \centering
    \begin{tabular}{cc}
     \includegraphics[scale = 0.5]{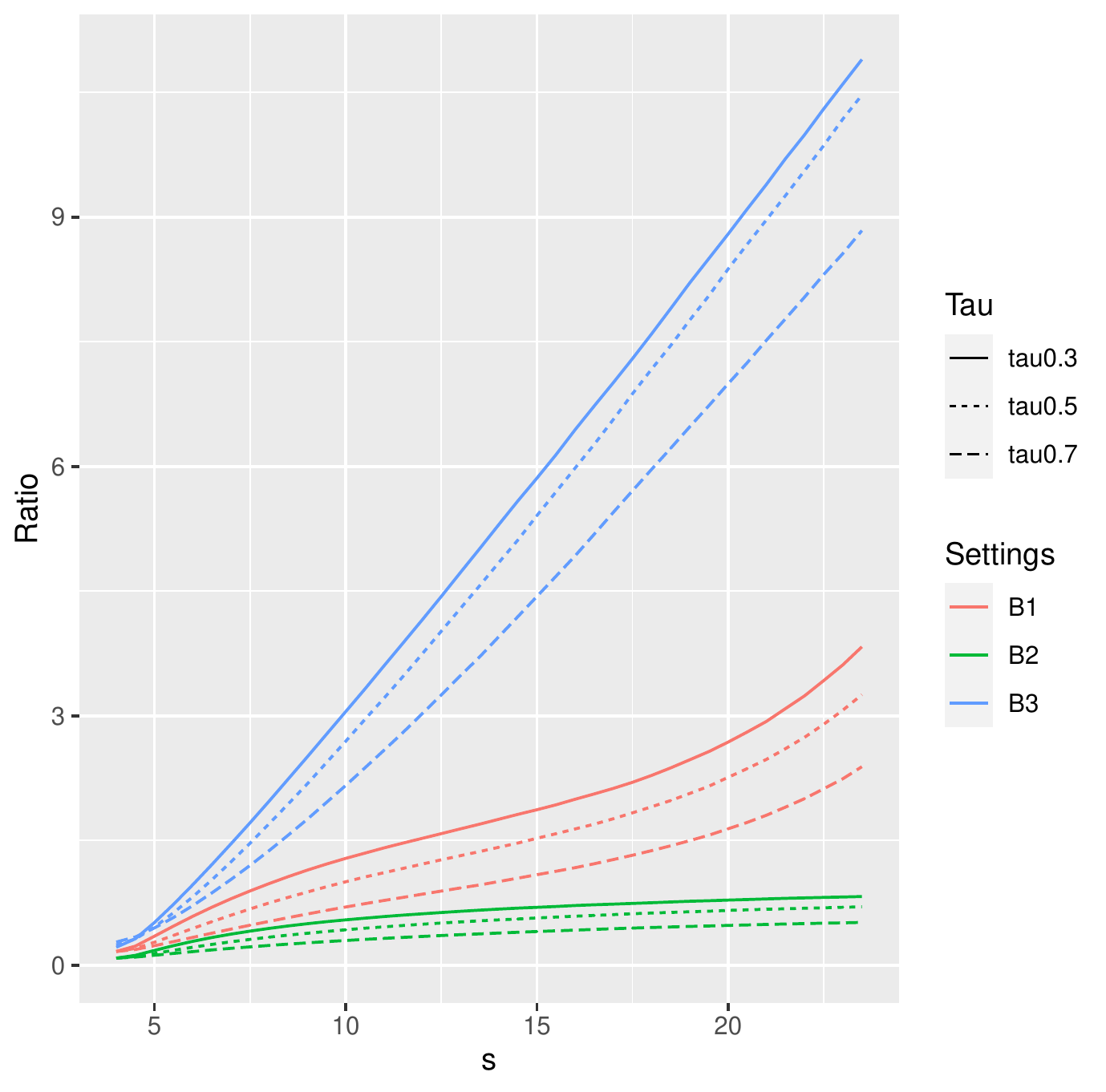} &  \includegraphics[scale = 0.5]{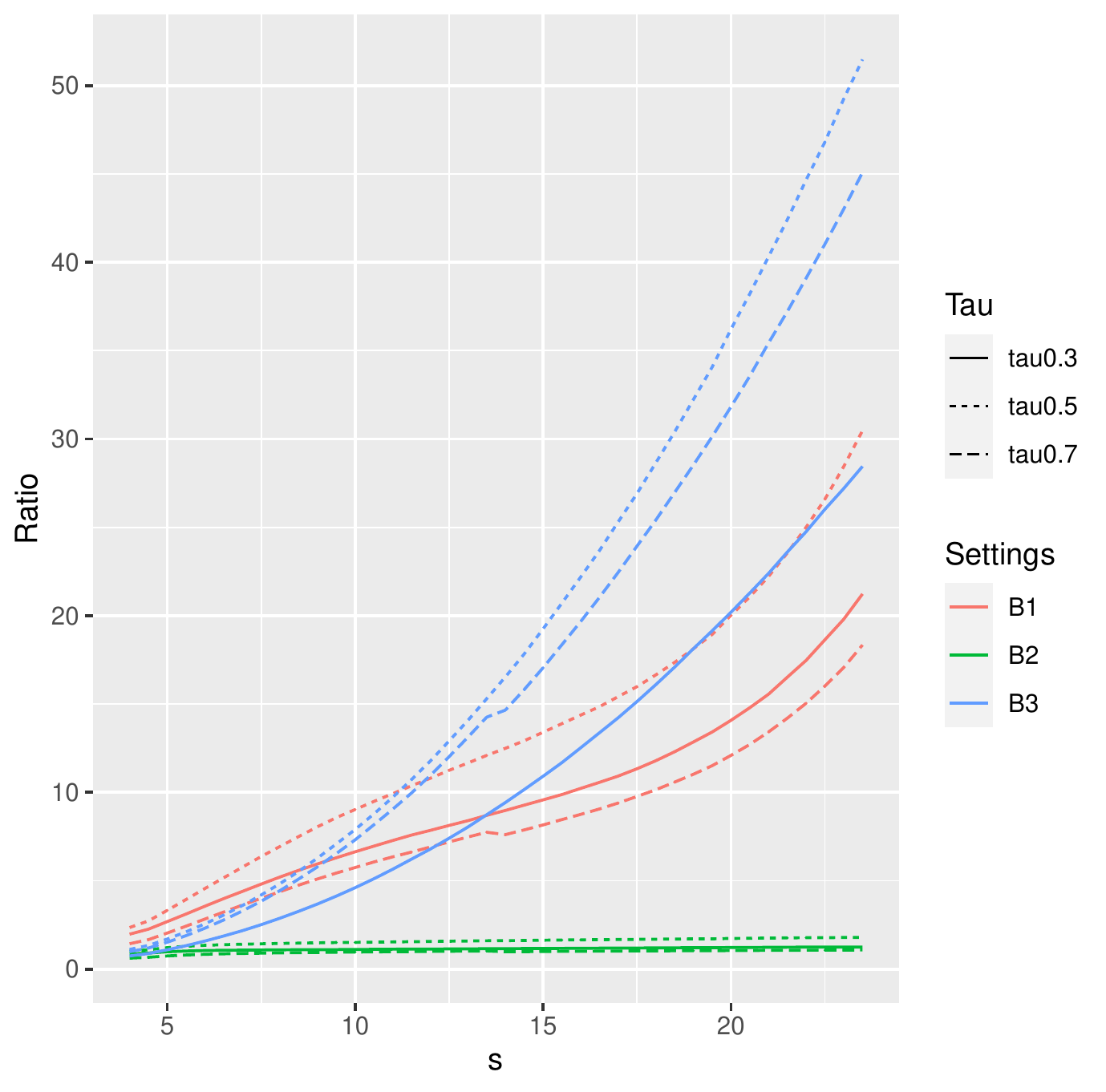} \\
     a) & b) 
     \end{tabular}
  \caption{Evolution of the ratio of a) $\mathbb{E}[(X-Y)|X\geq s]$ computed from the Clayton survival copula model, with respect to the value of $\mathbb{E}[(X-Y)|X\geq s]$  and  b) $\mathbb{E}[(X-Y)^2|X\geq s]$ computed from the Gumbel copula model, with respect to the value of $\mathbb{E}[(X-Y)^2|X\geq s]$ obtained in the benchmark settings.}
    \label{fig3}
\end{figure}

\begin{figure}
    \centering
    \begin{tabular}{cc}
     \includegraphics[scale = 0.5]{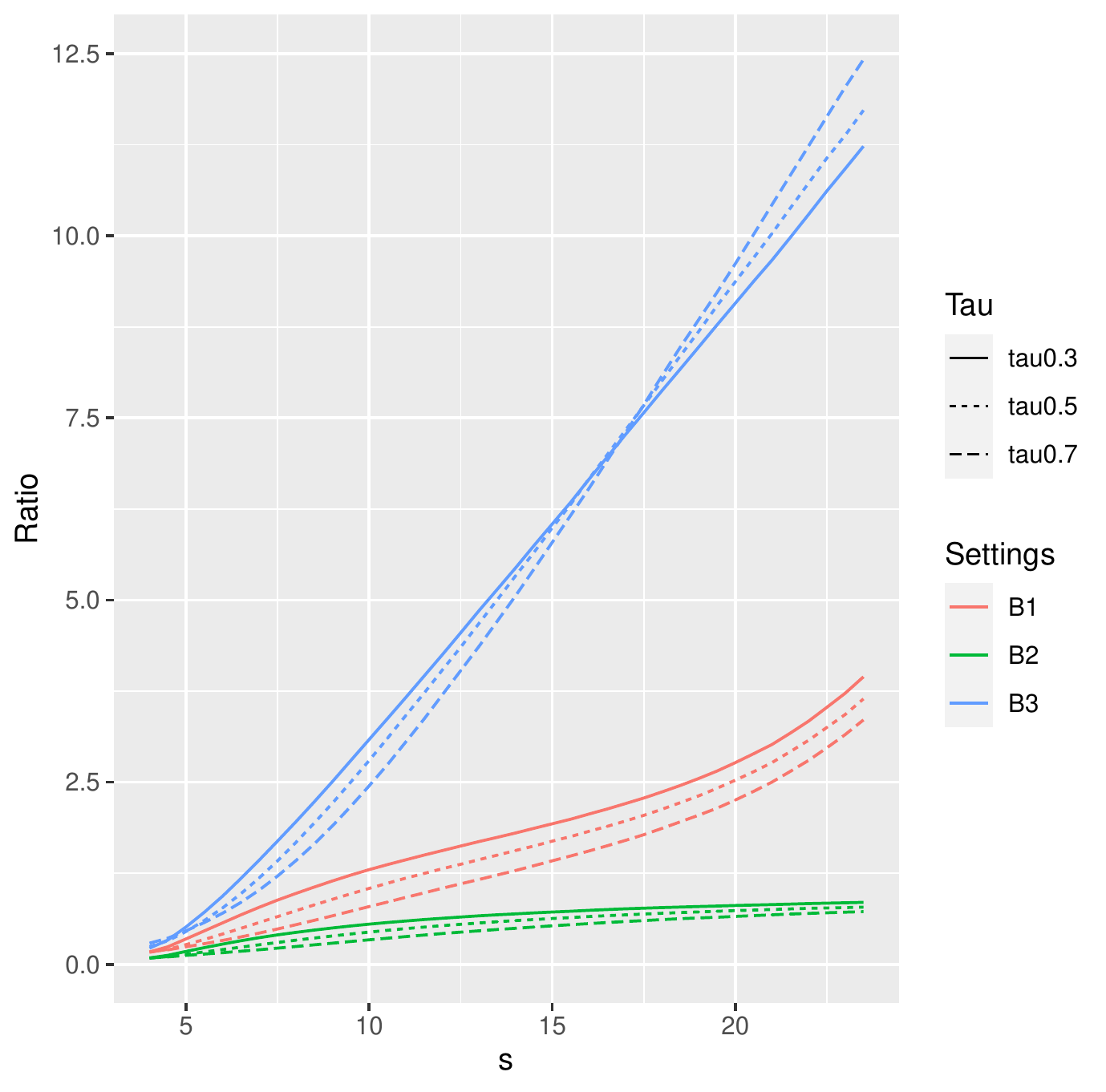} &  \includegraphics[scale = 0.5]{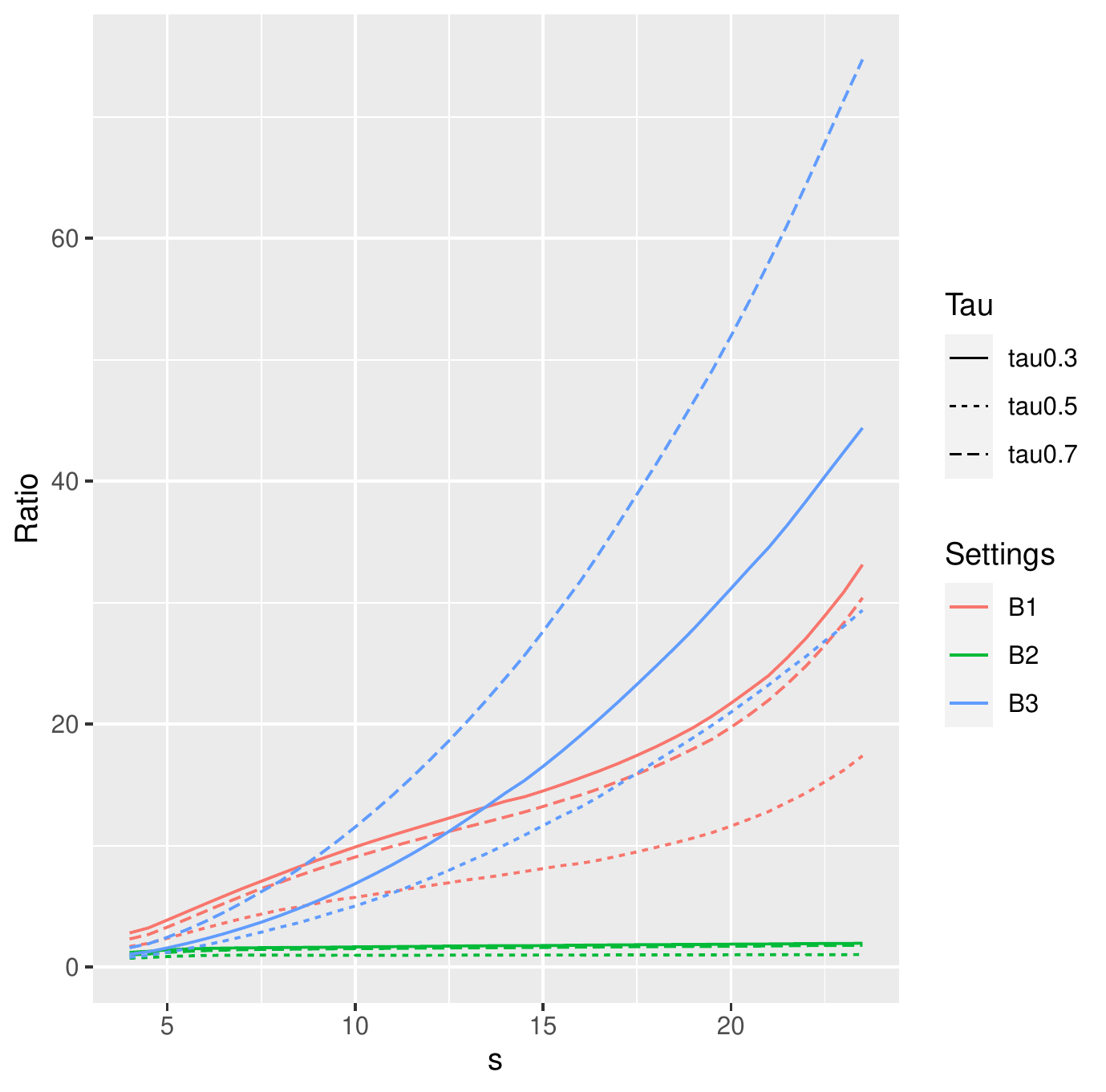} \\
     a) & b) 
     \end{tabular}
  \caption{Evolution of the ratio of a) $\mathbb{E}[(X-Y)|X\geq s]$ computed from the Clayton survival copula model, with respect to the value of $\mathbb{E}[(X-Y)|X\geq s]$  and  b) $\mathbb{E}[(X-Y)^2|X\geq s]$ computed from the Frank copula model, with respect to the value of $\mathbb{E}[(X-Y)^2|X\geq s]$ obtained in the benchmark settings.}
    \label{fig4}
\end{figure}



\textbf{Acknowledgment:} The authors acknowledge funding from the project \textit{Cyber Risk Insurance: actuarial modeling}, Joint Research Initiative under the aegis of Risk Fundation, with partnership of AXA, AXA GRM, ENSAE and Sorbonne Universit\'e.

\bibliographystyle{apalike}
\bibliography{bibli}

\end{document}